\documentclass[11pt]{article}
\raggedbottom
\usepackage{amsmath,amsfonts,epsf}
\usepackage{amssymb}
\usepackage{graphicx}
\usepackage{grffile}
\usepackage{graphicx}
\usepackage{blkarray}
\usepackage[nosort]{cite}
\usepackage{amssymb,amsmath,amsthm,amsfonts}
\usepackage{hyperref}
\usepackage{xcolor}
\usepackage{float}
\usepackage{comment}
\usepackage[english]{babel}
\usepackage{appendix}
\usepackage{quantikz}
\usepackage{tikzit}
\usetikzlibrary{backgrounds}
\usetikzlibrary{arrows}
\usetikzlibrary{shapes,shapes.geometric,shapes.misc}
\pgfdeclarelayer{edgelayer}
\pgfdeclarelayer{nodelayer}
\pgfsetlayers{background,edgelayer,nodelayer,main}
\tikzstyle{none}=[inner sep=0mm]
\usepackage{enumitem}

\tikzstyle{new style 0}=[fill=black, draw=black, shape=circle]
\tikzstyle{new style 1}=[fill=white, draw=black, shape=circle]
\tikzstyle{new style 2}=[fill=white, draw=black, shape=rectangle]
\tikzstyle{new edge style 5}=[draw=red]
\tikzstyle{new edge style 0}=[fill=none, draw=red]

\tikzstyle{new edge style 1}=[-]
\tikzstyle{new edge style 2}=[->, draw=red]
\tikzstyle{new edge style 3}=[<-, draw=red]
\tikzstyle{new edge style 4}=[->]
\tikzstyle{new edge style 0}=[draw=red, ->]

\textheight 22cm\textwidth 15.5cm
\oddsidemargin 0pt\evensidemargin 0pt\topmargin -40pt
\topmargin-20pt

\addtolength{\parskip}{1ex}
\jot=.5ex

\makeatletter\@addtoreset{equation}{section}\makeatother

\setlength\arraycolsep{2pt}

\renewcommand{\title}[1]{\vbox{\center\LARGE{#1}}\vspace{5mm}}
\renewcommand{\author}[1]{\vbox{\center#1}\vspace{5mm}}
\newcommand{\address}[1]{\vbox{\center\em#1}}

\newcommand{\Z}{\ensuremath{{\mathbb Z}}}
\newcommand{\N}{\ensuremath{{\mathbb N}}}

\newcommand{\R}{\ensuremath{\mathbb R}}
\newcommand{\C}{\ensuremath{\mathbb C}}
\newcommand{\I}{\ensuremath{{\mathcal I}}}
\newcommand{\U}{\ensuremath{{\mathcal U}}}
\newcommand{\Pmap}{\ensuremath{{\mathcal P}}}
\newcommand{\uu}{\ensuremath{{\mathbf u}}}
\newcommand{\vv}{\ensuremath{{\mathbf v}}}
\newcommand{\M}{\ensuremath{{\mathcal M}}}

\newcommand{\G}{\ensuremath{{\mathcal G}}}

\newtheorem{theorem}{Theorem}[]
\newtheorem{corollary}{Corollary}[section]

\newtheorem{defn}[corollary]{Definition}
\newtheorem{lemma}[corollary]{Lemma}
\newtheorem{proposition}[corollary]{Proposition}
\newtheorem{remark}[corollary]{Remark}

\newtheorem{notation}[corollary]{Notation}

\newtheorem{obs}[corollary]{Observation}
\definecolor{myblue}{RGB}{0,0,255}
\definecolor{mygreen}{RGB}{0,128,0}
\begin{document}
\newcommand{\categoryname}[1]{{\sf{#1}}}
\let\astap=\aap
\let\apjlett=\apjl
\let\apjsupp=\apjs
\let\applopt=\ao
\usetikzlibrary{backgrounds}
\usetikzlibrary{arrows}
\usetikzlibrary{shapes,shapes.geometric,shapes.misc}
\pgfdeclarelayer{edgelayer}
\pgfdeclarelayer{nodelayer}
\pgfsetlayers{background,edgelayer,nodelayer,main}
\tikzstyle{none}=[inner sep=0mm]

\title{Multi-qutrit exact synthesis}

\author{Amolak Ratan Kalra$^{1,2,4}$, 
Manimugdha Saikia$^{5}$, Dinesh Valluri, Sam Winnick$^{1,3}$, Jon Yard$^{1,3,4}$}

\address{${}^1$ Institute for Quantum Computing,
University of Waterloo, Waterloo, Ontario, Canada\\

${}^2$ David R. Cheriton School of Computer Science
University of Waterloo, Waterloo, Ontario, Canada\\

${}^3$ Dept. of Combinatorics $\&$ Optimization
University of Waterloo, Waterloo, Ontario, Canada\\

${}^4$ Perimeter Institute for Theoretical Physics
Waterloo, Ontario, Canada\\

${}^5$ Department of Mathematics, University of Western Ontario, London, Ontario, Canada}

\abstract{We present an exact synthesis algorithm for qutrit unitaries in $\mathcal{U}_{3^n}(\mathbb{Z}[1/3,e^{2\pi i/3}])$ over the Clifford$+T$ gate set with at most one ancilla. This extends the already known result of qutrit metaplectic gates being a subset of Clifford$+T$ gate set with one  ancilla. As an intermediary step, we construct an algorithm to convert 3-level unitaries into multiply-controlled gates, analogous to Gray codes converting 2-level unitaries into multiply-controlled gates. Finally, using catalytic embeddings, we present an algorithm to exactly synthesize unitaries $\mathcal{U}_{3^n}(\mathbb{Z}[1/3,e^{2\pi i/9}])$ over the Clifford$+T$ gate set with at most 2 ancillas. This, in particular, gives an exact synthesis algorithm of single-qutrit Clifford$+\mathcal{D}$ over the multi-qutrit Clifford$+T$ gate set with at most two ancillas.}
\\

\tableofcontents

\section{Introduction}

Let $\G$ be a group generated by a finite set $S \subset \U_{d}(\mathbb{C})$ of $d \times d$ complex unitary matrices such that $\G$ is topologically dense in $\U_{d}(\mathbb{C}),$ i.e., any element of $\U_{d}(\mathbb{C})$ can be arbitrarily closely approximated by an element of $\G$. Such a finite set of generators is called a universal gate set. The problem of exact synthesis in quantum computing asks for an algorithm to solve the word problem in the group $\G$ in terms of a universal gate set. Several well known universal gate sets such as Clifford$+T$, Clifford$+R$, Clifford$+V$ etc. are defined over number fields, often a field of cyclotomic numbers. The problem naturally has an arithmetic flavour.

The exact synthesis algorithms for $d=2,$ i.e., for qubits over the Clifford$+T$, Clifford$+V$ and other similar gate sets use methods from number theory such as quaternion factorization\cite{kmm,giles2013exact,yardapprox,kliuchnikov2015framework} and more recently catalytic embedding \cite{amy2023catalytic}. The main result in the single-qubit case for the Clifford$+T$ case was shown in \cite{kmm}. It states that the set of single qubit unitary matrices exactly implementable over the Clifford$+T$ gate set are precisely $\U_2(\mathbb{Z}[\frac{1}{\sqrt{2}}, i])$ the unitary matrices whose entries are in the ring $\mathbb{Z}[\frac{1}{\sqrt{2}},i]$. Moreover, the proof of this yields an efficient algorithm to solve the word problem for Clifford$+T$. This was later extended to multi-qubits by Giles and Selinger \cite{giles2013exact}. 

For $d= 3$, a synthesis algorithm is already known for single qutrit Clifford$+R$ gates as studied in \cite{vadymanyons} and later in \cite{kalra2023synthesis}. More precisely, the group $\U_{3}(\mathbb{Z}[\frac{1}{3}, e^{2\pi i/3}])$ is generated by the Clifford gates and $R = \mathrm{diag}(1,1,-1)$ gate. Moreover, given an arbitrary matrix $U$ in  $\U_{3}(\mathbb{Z}[\frac{1}{3}, e^{2\pi i/3}])$ there is an efficient algorithm to produce a sequence of gates from the Clifford and $R$ gates whose product is $U$. Such an algorithm is unknown for other more desirable variants such as the single-qutrit Clifford$+T$ gates. The Clifford$+T$ gates form a proper subgroup of $\U_{3}(\mathbb{Z}[\frac{1}{3}, e^{\frac{2\pi i}{9}}]).$ It was recently conjectured in \cite{kalra2023synthesis} and proved in  \cite{evra2024arithmeticity} that $\U_{3}(\mathbb{Z}[\frac{1}{3}, e^{2\pi i/9}])$ is generated by Clifford$+\mathcal{D}$ gates. This naturally asks for an extension of synthesis to the multi-qutrit gates $\U_{3^n}(\mathbb{Z}[\frac{1}{3}, e^{2\pi i/9}]).$ In this paper, we answer this question by means of a catalytic embedding of $\U_{3^n}(\mathbb{Z}[\frac{1}{3}, e^{\frac{2\pi i}{9}}])$ in $\U_{3^{n+1}}(\mathbb{Z}[\frac{1}{3}, e^{\frac{2\pi i}{3}}])$ and by developing synthesis in the group $\U_{3^{n}}(\mathbb{Z}[\frac{1}{3}, e^{\frac{2\pi i}{3}}])$ in terms of multi-qutrit Clifford$+T$ synthesis.

One interesting aspect of this approach is that the catalytic embedding allows us to bypass the need for using the more subtle single qutrit Clifford $+\mathcal{D}$ synthesis and instead use a more simpler  synthesis algorithm in the group $\U_{3^{n}}(\mathbb{Z}[\frac{1}{3}, e^{\frac{2\pi i}{3}}])$. The method of proof for this group relies on the `derivative mod $3$' trick from \cite{kalra2023synthesis}, as it was the case for the single-qutrit version. 

In Section \ref{prelim}, we recall the necessary qutrit gate sets, the notion of catalytic embeddings and some basic facts about the (localized) ring of  cyclotomic integers $\mathbb{Z}[\frac{1}{p},\zeta_{p^{l}}]$ which plays a crucial role in rest of the paper.

In Section \ref{sec3}, we give an algorithm for multi-qutrit synthesis of the gates in $\U_{3^n}(\mathbb{Z}[\frac{1}{3},e^{\frac{2\pi i}{3}}])$. We follow Giles and Selinger to first prove that any matrix $U \in \U_{3^n}(\mathbb{Z}[\frac{1}{3},e^{\frac{2\pi i}{3}}])$ can be written as a product of $3$-level matrices
of `type' $H,S,X$ and $R$ using the single-qutrit Clifford $+R$ algorithm. These are further converted into multiply controlled gates of type $H,S,X$ and $R$ respectively using an analog of Gray codes for qutrits. Each such multiply controlled gate is represented by an explicit Clifford $+T$ circuit with at most one ancilla using results in \cite{Yeh_2022}.  

In Section \ref{sec4} we use the results in Section \ref{sec3} and an explicit catalytic embedding $\Phi :  \U_{3^n}(\Z[\frac{1}{3},\zeta]) \hookrightarrow \U_{3^{n+1}}(\Z[\frac{1}{3},\omega])$  to prove that any gate in $\U_{3^n}(\mathbb{Z}[\frac{1}{3}, e^{\frac{2\pi i}{9}}])$ can be synthesized using Clifford$+T$ gates with at most two ancillae. This is achieved by providing an embedding from $\U_{3^n}(\mathbb{Z}[\frac{1}{3}, e^{\frac{2\pi i}{3}}])$ into the group of $(n+2)$-qutrit Clifford $+T$ gates obtained by composing the Catalytic embedding and the embedding $\U_{3^{n+1}}(\Z[\frac{1}{3},\omega]) \hookrightarrow (\text{Clifford}+T)^{(n+2)}$ from Section \ref{sec3}. \\
\textbf{Acknowledgement}: After the present work was completed, it was brought to our attention that Andrew N. Glaudell, Neil J. Ross, John van de Wetering and Lia Yeh independently established related results in \cite{julien}.


\section{Preliminaries}\label{prelim}

In this section we introduce three gate sets, which are the primary focus of this paper. To address the question of exact synthesis over these gate sets, we recall some properties of cyclotomic ring extensions and the recently introduced catalytic embedding methods discussed in \cite{amy2023catalytic}. \\


\subsection{Qutrit Gate Sets}
\begin{defn}
The single-qutrit Clifford$+\mathcal{D}$ group is defined as the group generated by the following matrices:
\[
H=\frac{1}{\sqrt{-3}}
\begin{bmatrix}
1 & 1 & 1\\
1 &  \omega & \omega^{2}\\
1 & \omega^{2} & \omega
\end{bmatrix}~~~
\mathcal{D}_{[a,b,c]}=\begin{bmatrix}
\zeta^{a} & 0 & 0\\
0 & \zeta^{b} & 0\\
0 & 0 & \zeta^{c}
\end{bmatrix}~~~
R_{[a,b,c]}=
\begin{bmatrix}
(-1)^a & 0 & 0\\
0 & (-1)^b & 0\\
0 & 0 & (-1)^c
\end{bmatrix}.
\]
\end{defn}
\begin{defn}
The single-qutrit Clifford$+T$ gate set is defined as the group generated by the following matrices:
\[
H=\frac{1}{\sqrt{-3}}
\begin{bmatrix}
1 & 1 & 1\\
1 &  \omega & \omega^{2}\\
1 & \omega^{2} & \omega
\end{bmatrix}~~~
S=\begin{bmatrix}
1 & 0 & 0\\
0 & 1 & 0\\
0 & 0 & \omega
\end{bmatrix}~~~~
T=\begin{bmatrix}
1 & 0 & 0\\
0 & \zeta & 0\\
0 & 0 & \zeta^{-1}
\end{bmatrix}.
\]

\end{defn}
\begin{defn}
The single-qutrit Clifford$+R$ gate set is defined as the group generated by the following matrices: 
\[
H=\frac{1}{\sqrt{-3}}
\begin{bmatrix}
1 & 1 & 1\\
1 &  \omega & \omega^{2}\\
1 & \omega^{2} & \omega
\end{bmatrix}~~~
S=\begin{bmatrix}
1 & 0 & 0\\
0 & 1 & 0\\
0 & 0 & \omega
\end{bmatrix}~~~~
R=\begin{bmatrix}
1 & 0 & 0\\
0 & 1 & 0\\
0 & 0 & -1
\end{bmatrix}
\]
\end{defn}
Note that in all the definitions above $\zeta=e^{\frac{2\pi i}{9}}$ and $\omega=e^{\frac{2\pi i}{3}}$
One way to extend these to universal multi-qutrit gate sets \cite{Nielsen_Chuang_2010} is to adjoin the $CX$ gate defined below:
\[
CX(\ket{x}\ket{y})=\ket{x}\ket{x+y~\text{(mod}~3)}.
\]
The multi-qubit Clifford$+\mathcal D$, Clifford$+T$, and Clifford$+R$ gate sets are obtained by adjoining $CX$ to their single qutrit counterparts. Further, note that the Clifford$+T$ and Clifford$+R$ gate sets are special cases of the Clifford+$\mathcal{D}$ gate set.

\begin{notation}[Single-qutrit permutation gates]
The permutation matrix corresponding to $\sigma\in S_3$ is called a permutation gate and is denoted by $X_{\sigma}$. When $\sigma = (1\;2\;3)$, then we get the standard $X$ gate for qutrit, that is
$$X = X_{(1\;2\;3)} = 
\begin{pmatrix}
0 & 0 & 1 \\ 1 & 0 & 0\\ 0 & 1 & 0\\
\end{pmatrix}.$$   
\end{notation}
\begin{remark} \label{rem:H,X_gen_S_3}
Note that $H^2 = X_{(2\;3)}$ and as we know a 2-cycle and a 3-cycle generate the whole group $S_3$, so $X$ and $H^2$ generate all the permutation gates. 
\end{remark}


\subsection{Catalytic Embeddings}\label{seccat}

In this section, we construct a catalytic embedding, in the sense of \cite{amy2023catalytic}, which will be used in the proof of Theorem \ref{th2}.
\begin{defn}[Catalytic embedding]
Let $\G_1$ be a subgroup of $\U_n(\C)$ and $\G_2$ be a subgroup of $\U_{nd}(\C)$. We call a group homomorphism $\Phi: \G_1 \to \G_2$ a $d$-dimensional catalytic embedding with respect to a quantum state $\ket{\lambda} \in \C^d$ if
$$\Phi(U)(\ket{u}\otimes \ket{\lambda}) = (U\ket{u}) \otimes \ket{\lambda} \text{ \quad for all } \ket{u} \in \C^n$$
\end{defn}

Let $R=\mathbb Z[\frac1d]$. We explicitly describe a catalytic embedding of $\U_n(R[\zeta_{d^k}])$ into $ \U_{nd}(R[\zeta_{d^{k-1}}])$. Let $U \in \U_n(R[\zeta_{d^k}])$. The minimal polynomial of $\zeta_{d^k}$ over $R[\zeta_{d^{k-1}}]$ is $x^d-\zeta_{d^{k-1}}$. It follows that $1, \zeta_{d^k}, \ldots, \zeta_{d^{k}}^{d-1}$ are linearly independent over $R[\mathbb{\zeta}_{d^{k-1}}]$  and that they span $R[\zeta_{d^k}]$.  In other words, an arbitrary element $u \in R[\zeta_{d^k}]$ can be written uniquely as $\sum_{j=0}^{d-1} a_j \zeta_{d^{k}}^j,$ where $a_j \in R[\zeta_{d^{k-1}}].$
    Therefore, there exist unique $A_1, A_2, ..., A_d \in \mathcal{M}_n(R[\zeta_{d^{k-1}}])$ such that: $$U= \sum_{j=0}^{d-1}\zeta_{d^k}^jA_j.$$
    We define $\Phi_k: \U_n(R[\zeta_{d^k}]) \to \U_{nd}(R[\zeta_{d^{k-1}}])$ by: $$\Phi_k(U) = \sum_{j=0}^{d-1} A_j \otimes \Lambda_k^j,$$
    where $\Lambda_k = \left(
    \begin{array}{c|c}
    0 & \zeta_{d^{k-1}} \\
    \hline
    I_{d-1} & 0 \\
    \end{array}
    \right)$. It is easy to verify that $\Phi_k$ satisfies the requirements of a catalytic embedding with $\ket{\lambda_k} = (\zeta_{d^k}^{d-1},\zeta_{d^k}^{d-2},...,1)^t$ as defined above.

In particular, for each $n\in \N$ we have a catalytic embedding $\U_{3^n}(\Z[\frac{1}{3}, \zeta]) \hookrightarrow \U_{3^{n+1}}(\Z[\frac{1}{3},\omega])$.

\subsection{Facts from Number Theory }
\begin{proposition} \label{prop:1/p1/chi}
Let $p$ be a prime number. For $l\in \N$, let $\zeta_{p^l}$ be a primitive $p^l$-th root of unity and $\chi_{p^l} = 1 -\zeta_{p^l}$. Then $\Z\left[\frac{1}{p},\zeta_{p^l}\right] = \Z\left[\frac{1}{\chi_{p^l}},\zeta_{p^l}
\right]$.
\end{proposition}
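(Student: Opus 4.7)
The plan is to reduce the equality of rings to the well-known factorization of $p$ in the cyclotomic integers $\Z[\zeta_{p^l}]$, namely that
\[
p \;=\; u\cdot \chi_{p^l}^{\,p^{l-1}(p-1)}
\]
for some unit $u \in \Z[\zeta_{p^l}]^{\times}$. Once this is in hand, both containments will follow from short manipulations with fractions, so most of the work is in establishing the factorization.

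To prove the factorization, I would start from the cyclotomic polynomial $\Phi_{p^l}(x) = \prod_{\substack{1\le k<p^l\\ \gcd(k,p)=1}} (x-\zeta_{p^l}^k)$, and use the identity $\Phi_{p^l}(x) = (x^{p^l}-1)/(x^{p^{l-1}}-1)$. Evaluating at $x=1$ (using l'H{\^o}pital or a direct power-series expansion of the quotient) gives $\Phi_{p^l}(1)=p$. Therefore
\[
p \;=\; \prod_{\substack{1\le k<p^l\\ \gcd(k,p)=1}} (1-\zeta_{p^l}^k).
\]
The number of factors is $\phi(p^l) = p^{l-1}(p-1)$. The key step is then to show that each factor $1-\zeta_{p^l}^k$ with $\gcd(k,p)=1$ is an associate of $\chi_{p^l}=1-\zeta_{p^l}$. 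Indeed,
\[
\frac{1-\zeta_{p^l}^k}{1-\zeta_{p^l}} \;=\; 1+\zeta_{p^l}+\zeta_{p^l}^2+\cdots+\zeta_{p^l}^{k-1} \;\in\; \Z[\zeta_{p^l}],
\]
and by choosing $k'$ with $kk'\equiv 1 \pmod{p^l}$ and applying the same observation with $k$ replaced by $k'$ (using $\zeta_{p^l}^k$ as the primitive root of unity), one obtains the inverse in $\Z[\zeta_{p^l}]$. Hence the ratio is a unit, and collecting these units gives $p=u\cdot \chi_{p^l}^{\,p^{l-1}(p-1)}$.

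With the factorization in place, the two containments are immediate. For the inclusion $\Z[\tfrac{1}{p},\zeta_{p^l}]\subseteq \Z[\tfrac{1}{\chi_{p^l}},\zeta_{p^l}]$, I write
\[
\frac{1}{p} \;=\; \frac{u^{-1}}{\chi_{p^l}^{\,p^{l-1}(p-1)}} \;\in\; \Z\!\left[\tfrac{1}{\chi_{p^l}},\zeta_{p^l}\right].
\]
For the reverse inclusion, since $\chi_{p^l}\in \Z[\zeta_{p^l}]$ is already in the target ring, I only need $1/\chi_{p^l}$, and
\[
\frac{1}{\chi_{p^l}} \;=\; \frac{u\cdot \chi_{p^l}^{\,p^{l-1}(p-1)-1}}{p} \;\in\; \Z\!\left[\tfrac{1}{p},\zeta_{p^l}\right].
\]

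The main obstacle is the unit computation $\frac{1-\zeta_{p^l}^k}{1-\zeta_{p^l}}\in \Z[\zeta_{p^l}]^{\times}$; the standard route is the symmetric inverse argument above via the invertibility of $k$ modulo $p^l$. Everything else is routine algebra, so the proposition follows cleanly once the factorization of $p$ in terms of $\chi_{p^l}$ is established.
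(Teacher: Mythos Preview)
Your proof is correct and follows essentially the same approach as the paper: both establish the factorization $p = u\,\chi_{p^l}^{\phi(p^l)}$ via $\Phi_{p^l}(1)=p$ and the fact that each $\tfrac{1-\zeta_{p^l}^k}{1-\zeta_{p^l}}$ is a unit, and then derive the two containments by the identical fraction manipulations. The only difference is that you spell out the unit argument (geometric sum plus inverse of $k$ mod $p^l$) where the paper simply cites it as well known.
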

\begin{proof}
It is well known that the value of the $p^l$-th cyclotomic polynomial at $1$ is $p$, so
    $$p= \prod_{\substack{1\le k\le p^l \\
    \gcd(k,p)=1}} (1- \zeta_{p^l}^k) = (1-\zeta_{p^l})^{\phi(p^{l})} \prod_{\substack{1\le k\le p^l \\
    \gcd(k,p)=1}} \frac{1- \zeta_{p^l}^k}{1-\zeta_{p^l}}.$$
    We know that for all $k$ with $\gcd(k,p)=1$, the term $\frac{1- \zeta_{p^l}^k}{1-\zeta_{p^l}}$ is a unit in $\Z[\zeta_{p^l}]$. Therefore $p = \chi_{p^l}^{\phi(p^{l})} u $ for some unit $u \in\Z[\zeta_{p^l}]$. Hence we can write
    $$\frac{1}{p} =  \frac{u^{-1}}{\chi_{p^l}^{\phi(p^{l})}} \in \Z\left[\frac{1}{\chi_{p^l}},\zeta_{p^l}
    \right] \text{ and } \frac{1}{\chi_{p^l}} = \frac{(1-\zeta_{p^l})^{(\phi(p^{l})-1)} u}{p} \in  \Z\left[\frac{1}{p},\zeta_{p^l}\right]$$
\end{proof}

\begin{notation}
From now on when the subscript is not specified, then by $\zeta$ we mean $e^{2\pi i/9} $, a primitive 9th root of unity. Also we denote $\omega=e^{2\pi i/3}$, a primitive 3rd root of unity and $\chi = 1-\omega$.
\end{notation}
As a consequence of Proposition \ref{prop:1/p1/chi}, we see that localizing the element 3 or localizing the element $\chi$ produces the same ring $\Z[\frac{1}{3},\omega]$. Further note that $\frac{1}{\sqrt{-3}} = \frac{\omega^2}{\chi}$ and $\omega^2$ is a unit, therefore localizing at ${\sqrt{-3}}$ also produces the same ring i.e:
$$\Z\left[\frac{1}{3},\omega\right] = \Z\left[\frac{1}{\sqrt{-3}},\omega\right]=\Z\left[\frac{1}{\chi},\omega\right].$$
 We can talk about denominator exponent and smallest denominator exponents of an element $x \in \Z[\frac{1}{3},\omega]$ with respect to $\chi$. We formally define them below:

\begin{defn}[sde of an element w.r.t. $\chi$]
Let $x\in \Z[\frac{1}{3},\omega]$, then $f \in \N \cup \{0\}$ is said to be a \textit{denominator exponent} of $x$ with respect to $\chi$ if $\chi^f x \in \Z[\omega]$. Further if $\chi^f x \in \Z[\omega]$ but $ \chi^{f-1} x \notin \Z[\omega]$, then $f$ is said to be the \textit{smallest denominator exponent} or \textit{sde} of $x$ (with respect to $\chi$). 
\end{defn}
\begin{defn}[sde of a column]
Let $\uu\in \Z[\frac{1}{3},\omega]^m$ be a $m\times 1$ column, then $f \in \N \cup \{0\}$ is said to be a \textit{denominator exponent} of $\uu$ with respect to $\chi$ if $\chi^f \uu \in \Z[\omega]^m$. Further if $\chi^f \uu \in \Z[\omega]^m$ but $ \chi^{f-1} \uu \notin \Z[\omega]^m$, then $f$ is said to be the \textit{smallest denominator exponent} or \textit{sde} of $\uu$ (with respect to $\chi$).
\end{defn}
\section{Unitaries with entries from $\mathbb{Z}{ [\frac{1}{3},\omega]}$} \label{sec3}

 Given a $n$-qutrit gate $U$ with entries in the ring $\Z[\frac{1}{3},\omega]$ we want to make a circuit equivalent to $U$ using multi-qutrit Clifford$+T$ gate sets. We already know that single-qutrit Clifford$+T$ cannot generate every unitary gate with entries in $\Z[\frac{1}{3},\omega]$, in particular, the $R$ gate \footnote{One can prove this by observing that the $R$ gate has sde $=0$ and by characterizing the subgroup of Clifford $+T$ matrices of sde $=0$. Such a characterization can be done by solving certain integral quadratic forms, see Section 4 of \cite{kalra2023synthesis}.
 }. But in \cite[Corollary 23]{metaplectic}, it was shown that a single-qutrit Clifford$+R$ gate can be represented using a circuit over Clifford$+T$ given an extra ancilla. Our Theorem \ref{th1} extends \cite[Corollary 23]{metaplectic} to make an $n$-qutrit gate with entries from $\Z[\frac{1}{3},\omega]$ in terms of a Clifford$+T$ circuit with an extra ancilla.

\begin{theorem} \label{generalizationGS} \label{th1}
Let $U \in \U_{3^n}(\Z[\frac{1}{3},\omega])$. Then $U$ can be exactly represented by a quantum circuit over Clifford$+T$, using at most one ancilla.
More precisely, there exists an injective group homomorphism 
$$\U_{3^n}(\mathbb{Z}[\frac{1}{3}, e^{\frac{2\pi i}{3}}]) \xhookrightarrow{\Psi} (\text{Cliff}+T)^{(n+1)}$$
and nonzero $\ket w$ such that 
$\Psi(U)\ket{v}\otimes \ket{w} = U \ket{v} \otimes \ket{w}$

\end{theorem}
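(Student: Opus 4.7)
My plan is to mirror the Giles--Selinger strategy from qubit Clifford$+T$, adapted to qutrits, in three layers: (i) reduce $U$ to a product of $3$-level unitaries, (ii) conjugate each $3$-level unitary into a multiply-controlled single-qutrit gate via a qutrit analog of Gray codes, and (iii) compile each multiply-controlled gate into Clifford$+T$ using one catalyzed ancilla. First, given $U \in \U_{3^n}(\Z[\frac{1}{3},\omega])$, I would drive the first column of $U$ to a standard basis vector (up to a unit) by an induction on the sde with respect to $\chi$, using the ``derivative mod $3$'' trick from \cite{kalra2023synthesis}: at each stage the mod-$\chi$ reduction of the column lies in a small orbit that can be cleared by a $3$-level Clifford$+R$ operation acting on three of its coordinates. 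Iterating through all columns would yield a factorization $U = P_1 P_2 \cdots P_N$, where each $P_i$ is a $3$-level unitary whose $3\times 3$ nontrivial block is a single-qutrit $H$, $S$, $X$, or $R$ gate, by virtue of the single-qutrit Clifford$+R$ synthesis being invoked block-by-block.

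Second, each factor $P_i$ is supported on a triple of basis states $\{\ket{x_i}, \ket{y_i}, \ket{z_i}\}$ of $\{0,1,2\}^n$. I would construct a qutrit Gray-code routine: a short sequence of $CX$ gates (Clifford, hence free) that maps this triple onto a triple of states differing only in the last qutrit, namely $\ket{0\cdots 0}\otimes\ket{0}$, $\ket{0\cdots 0}\otimes\ket{1}$, $\ket{0\cdots 0}\otimes\ket{2}$. Conjugation by this Clifford permutation turns $P_i$ into a multiply-controlled single-qutrit gate of the same type. Unlike the qubit case, two basis states of $\{0,1,2\}^n$ need not differ in only one trit and a triple need not form a ``line,'' so the routine requires a careful inductive construction: align $\ket{x_i}$ and $\ket{y_i}$ first by a sequence of coordinate-wise controlled-$X$ moves, then peel off $\ket{z_i}$ relative to the stabilized pair.

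Third, I would implement each multiply-controlled gate in Clifford$+T$. Multiply-controlled $H$, $S$, and $X$ are Clifford and compile without ancilla by standard constructions. Multiply-controlled $R$ is the only nontrivial case and is precisely where the ancilla enters: I would invoke the metaplectic circuit of \cite[Corollary 23]{metaplectic} together with the $n$-qutrit controlled variants from \cite{Yeh_2022} to implement it on $n+1$ qutrits, returning the ancilla to its initial state $\ket{w}$ at the end of each block. Because the ancilla is restored block-by-block, the same $\ket{w}$ serves as a catalytic state across the entire circuit; concatenating the blocks yields a Clifford$+T$ circuit on $n+1$ qutrits that acts as $U$ on the first $n$ qutrits whenever the last is in state $\ket{w}$. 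Defining $\Psi(U)$ to be this circuit and appealing to the catalytic embedding formalism of \cite{amy2023catalytic} makes $\Psi$ a group homomorphism; injectivity is automatic from the defining identity $\Psi(U)(\ket{v}\otimes\ket{w}) = (U\ket{v})\otimes\ket{w}$, which determines $U$ from $\Psi(U)$.

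The main obstacle I anticipate is the qutrit Gray-code step: making the $CX$ routing work uniformly for arbitrary triples of $n$-trit strings, and verifying that after conjugation the resulting multiply-controlled gate has the same ``type'' ($H$, $S$, $X$, or $R$) as the original $3 \times 3$ block. Once that is in place, Step 1 is essentially a lift of the single-qutrit Clifford$+R$ synthesis from \cite{kalra2023synthesis} applied to columns of $U$, and Step 3 is a bookkeeping exercise using the already-established metaplectic results.
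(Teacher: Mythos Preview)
Your proposal is essentially the paper's proof: the same three-layer structure (column-by-column sde reduction via the derivative-mod-$3$ trick to get $3$-level unitaries of type $H,S,X,R$; a qutrit Gray-code conjugation to convert these to multiply-controlled gates; then compilation via \cite{Yeh_2022} with the single ancilla entering only for controlled-$R$ through \cite{metaplectic}). One small correction: multiply-controlled $H$, $S$, $X$ (and the Gray-code routing gates for $n>2$, which are $(n{-}1)$-controlled permutations, not bare $CX$) are \emph{not} Clifford, so they are not free---but \cite[Theorem~1]{Yeh_2022} still compiles them ancilla-free over Clifford$+T$, which is what both you and the paper actually need.
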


Note that although the extra ancilla seems undisturbed by the embedding $\Psi$, the circuit representation of an arbitrary gate often involves a non-trivial action on the anicilla.

The proof of the above theorem goes through a series of steps. The first step is to express $U \in \U_{3^n}(\Z[\frac{1}{3},\omega])$ as a product of 3-level unitaries, which are then converted into (multiply)-controlled single qutrit Clifford$+R$ gates and finally to a circuit over multi-qutrit Clifford$+T$. Below we explain these steps in detail by defining necessary tools for the proof of Theorem \ref{generalizationGS}.

\subsection{Reduction to $3$-level unitaries}

\begin{defn}[3-level matrices]
Let $R$ be a ring. Let $U \in \M_3(R)$ be given by the matrix:
\[
U=\begin{pmatrix}
a_1 & a_2 & a_3\\
b_1 & b_2 & b_3\\
c_1 & c_2 & c_3
\end{pmatrix}
\]
A matrix $M \in \M_m(R)$ (with $m\ge 3$) is said to be a 3-level matrix of type $U$ if there is a coordinate subspace of dimension $3$, say with indices $j_1$, $j_2$, and $j_3$, on which $M$ acts by $U$, and such that $M$ acts trivially on its orthogonal complement. In other words:
\[
M=\begin{blockarray}{cccccccc}
    &  & j_1 & & j_2 &  & j_3 &  \\
    \begin{block}{c(c|c|c|c|c|c|c)}
       & I &&&&&& \\\cline{2-8}
      j_1 & & a_1 && a_2 && a_3 \\\cline{2-8}
       & && I &&&& \\\cline{2-8}
      j_2 & & b_1 && b_2 && b_3  \\\cline{2-8}
       &&&&& I && \\ \cline{2-8}
      j_3 & & c_1 && c_2 && c_3 \\\cline{2-8}
       &&&&&&& I\\ 
    \end{block}
  \end{blockarray}
\]
We denote the above matrix as $U_{[j_1,j_2,j_3]}$.
\end{defn}

\begin{defn}[1 and 2-level matrices]
We define 1-level and 2-level matrices similarly to the above definition. For $U=\begin{pmatrix}
    a & b\\ c & d
\end{pmatrix} \in \M_2$, a 1-level matrix of type $a$ and 2-level matrices $U$ has the form:
\begin{align*}
    [a]_{[j]} = \begin{blockarray}{cccc}
    && j &\\
    \begin{block}{c(c|c|c)}
     & I && \\\cline{2-4}
      j & & a & \\\cline{2-4}
     & && I\\
    \end{block}
  \end{blockarray}  \;, \;\;\;\; & \;\;\;\; 
   U_{[j,l]}=\begin{blockarray}{cccccc}
    && j & & l& \\
    \begin{block}{c(c|c|c|c|c)}
       & I &&&& \\\cline{2-6}
      j & & a && b \\\cline{2-6}
      & && I && \\\cline{2-6}
      l & & c && d &  \\\cline{2-6}
      &&&&& I \\
    \end{block}
  \end{blockarray}
\end{align*}
respectively.
\end{defn}
\begin{remark}
    Note that by definition, if the size of the matrix is greater than or equal to 3, then all 1-level and 2-level are also 3-level matrices. In particular, if the size of the matrix is greater than or equal to 3, then a 3-level unitary of type $S$ or $R$ is equivalent to a 1-level unitary of type $\omega$ or $-1$  respectively.
\end{remark}

We use the map $\Pmap: \Z[\omega] \to \Z/3\Z$ defined by $g(\omega) \mapsto g(1)$. Recall that $g \in\chi\Z[\omega]$ iff $\Pmap(g) = 0$, and additionally, $g \in\chi^2 \Z[\omega]$ iff $\Pmap(g) = g(1) = 0 $ and $\Pmap(g') = g'(1) = 0$ where $g'(\omega)$ is the formal derivative of $g$ with respect to the formal variable $\omega$. \cite{kalra2023synthesis}. 

\begin{lemma} \label{sde0}
Let $\mathbf{u} \in \Z[\frac{1}{3},\omega]^m$ be an $m$-dimensional column vector of norm 1 and sde 0. Then $\uu$ is some permutation of $(\pm\omega^a,0,....,0)^t$ for some $a\in\mathbb Z$.
\end{lemma}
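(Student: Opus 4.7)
The plan is to exploit the fact that having sde $0$ simply means all entries lie in the integral ring $\Z[\omega]$, and then use the elementary arithmetic of the Eisenstein integers.

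First I would note that $\mathrm{sde}(\uu) = 0$ means $\chi^0 \uu = \uu \in \Z[\omega]^m$, so every entry $u_i$ can be written as $u_i = a_i + b_i \omega$ with $a_i,b_i \in \Z$. A direct computation using $\omega + \omega^2 = -1$ and $\omega^3 = 1$ gives
\[
|u_i|^2 = (a_i + b_i \omega)(a_i + b_i \omega^2) = a_i^2 - a_i b_i + b_i^2 \in \Z_{\ge 0}.
\]
So each $|u_i|^2$ is a nonnegative integer.

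Next, the hypothesis $\|\uu\|^2 = \sum_{i=1}^m |u_i|^2 = 1$ forces exactly one index $i_0$ to satisfy $|u_{i_0}|^2 = 1$, with all other entries $u_i = 0$ (since $|u_i|^2 = 0$ implies $a_i = b_i = 0$).

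Finally, I would identify the elements of $\Z[\omega]$ of absolute value $1$: solving $a^2 - ab + b^2 = 1$ over $\Z$ yields the six solutions $(a,b) \in \{(\pm1,0),(0,\pm1),(1,1),(-1,-1)\}$, corresponding precisely to the units $\{\pm 1, \pm \omega, \pm \omega^2\} = \{\pm \omega^a : a \in \Z\}$ of $\Z[\omega]$. Hence $u_{i_0} = \pm \omega^a$ for some $a \in \Z$, and $\uu$ is a permutation of $(\pm \omega^a, 0, \ldots, 0)^t$ as claimed.

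There is no real obstacle here: the lemma is essentially the statement that the unit group of $\Z[\omega]$ consists of $\pm\omega^a$ together with the observation that a sum of nonnegative integers equals $1$ only when exactly one term is $1$. The only point requiring care is linking the definition of sde to membership in $\Z[\omega]^m$, which is immediate from the definition given just above the lemma.
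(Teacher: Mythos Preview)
Your proof is correct and follows essentially the same approach as the paper: both reduce to the integral quadratic form $a^2 - ab + b^2$ on $\Z[\omega]$, observe that the norm condition $\sum_i |u_i|^2 = 1$ forces exactly one term to equal $1$, and then enumerate the six integer solutions of $a^2 - ab + b^2 = 1$ to identify the nonzero entry as a unit $\pm\omega^a$. Your write-up is slightly more explicit in separating the step ``each $|u_i|^2$ is a nonnegative integer'' from the subsequent enumeration, but the argument is the same.
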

\begin{proof}
As sde$(\uu)=0$, so $\uu = (u_1,...,u_m) \in \Z[\omega]^m$. Let $u_j = a_j \omega +b_j $ for some $a_j,b_j \in \Z$. Using the fact that the norm of $\uu$ is 1, we get the equation $$\sum_{j=1}^m a_j^2 + b_j^2 -a_jb_j = 1.$$ Clearly, up to permutation the only solutions for the above equation are i) $a_1 = \pm 1$ and all the other variables are 0, ii) $b_1=\pm 1$ and all the other variables are 0, and iii) $(a_1,b_1)=(1,1)$ or $(-1,-1)$ and all other variables are 0. In all these cases, we see that $u_1 =\pm\omega^a$ for some $a \in \mathbb Z$. This proves our statement.
\end{proof}

\begin{lemma} \label{lemma:m=1,2}
    Let $m=1$ or $2$. If $\uu\in\mathbb Z[\frac{1}{3},\omega]^m$ is a unit vector then $\mathrm{sde}(\uu)=0$.
\end{lemma}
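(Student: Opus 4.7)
The plan is to clear denominators and reduce to an arithmetic statement in $\Z[\omega]$. Set $k = \mathrm{sde}(\uu)$ and $\alpha_i = \chi^k u_i \in \Z[\omega]$; by the definition of sde, if $k\geq 1$ then at least one $\alpha_i$ is \emph{not} divisible by $\chi$ in $\Z[\omega]$ (otherwise $\chi^{k-1}\uu$ would already be integral, contradicting minimality). Using $\chi\bar\chi = (1-\omega)(1-\omega^2) = 3$, the unit-vector condition $\sum_i u_i \bar{u}_i = 1$ rescales to
\[
\sum_{i=1}^{m} N(\alpha_i) \;=\; 3^k,
\]
where $N(a+b\omega) = a^2 - ab + b^2$ is the absolute-value-squared norm on $\Z[\omega]$. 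I would then argue by contradiction, assuming $k\geq 1$ so that the left-hand side is divisible by $3$.

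The arithmetic engine is the congruence $N(a+b\omega) \equiv (a+b)^2 \pmod{3}$, i.e.\ $N(\alpha) \equiv \Pmap(\alpha)^2 \pmod{3}$. This recovers the earlier fact that $3 \mid N(\alpha)$ iff $\chi \mid \alpha$, and moreover forces $N(\alpha) \bmod 3 \in \{0,1\}$ for every $\alpha \in \Z[\omega]$. For $m=1$, $N(\alpha_1) = 3^k$ would give $3 \mid N(\alpha_1)$ and hence $\chi \mid \alpha_1$, contradicting minimality. For $m=2$, two values from $\{0,1\}$ can sum to $0 \pmod 3$ only if both are $0$, forcing $\chi \mid \alpha_1$ and $\chi \mid \alpha_2$ simultaneously, again a contradiction. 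Hence $k = 0$, as claimed.

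The only real obstacle is spotting the clean mod-$3$ identity $N \equiv \Pmap^2$, which ties the norm appearing in the unit-vector condition to the $\Pmap$ map the paper already uses to detect divisibility by $\chi$. Once that is in hand, the casework on $m \in \{1,2\}$ is immediate, and one also sees why the restriction $m \leq 2$ is essential: at $m = 3$ three summands equal to $1 \pmod 3$ already sum to $0 \pmod 3$, so no contradiction is available, which is consistent with the need for the reduction to $3$-level unitaries developed in the remainder of Section \ref{sec3}.
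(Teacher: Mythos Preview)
Your argument is correct and is essentially the paper's own proof, just phrased in terms of the norm form $N(a+b\omega)=a^2-ab+b^2$ rather than applying the ring homomorphism $\Pmap$ directly to the equation $\sum_j |v_j|^2 = |\chi|^{2f}$. The identity $N(\alpha)\equiv \Pmap(\alpha)^2 \pmod 3$ you isolate is exactly what the paper is implicitly using when it writes $\sum_j \Pmap(v_j)^2 = 0$, since $\Pmap$ is multiplicative and $\Pmap(\bar v_j)=\Pmap(v_j)$.
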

\begin{proof}
    To the contrary, suppose $\mathrm{sde}(\uu)=f>0$ and $\uu$ is a unit vector. Then we have $\chi^f\uu = \vv$ for some $\vv = (v_1,...,v_m)^t\in \Z[\omega]^m$ but $\chi^{f-1}\uu \notin \Z[\omega]^m$. Applying $\Pmap$ to both sides of $\sum_{j=1}^m |v_j|^2 = |\chi|^{2f}$, we obtain:
    
    $$\sum_{j=1}^m \Pmap(v_j)^2 = 0  .$$
    Since $\Pmap(v_j)^2$ can be either 0 or 1 and $m<3$, so we get that $\Pmap(v_j)$ must be 0 for all $j$, i.e. ${v_j} \in {\chi}\Z[\omega]$, i.e. ${\vv} \in {\chi}\Z[\omega]^m$. Then $\chi^{f-1}\uu = {\vv} \in {\chi}\Z[\omega]^m$, a contradiction.
\end{proof}

\begin{corollary} \label{corollary:m=1,2}
    Let $m=1$ or 2 and $\uu \in \Z[\frac{1}{3},\omega]^m$ be a column vector of norm 1. Then there exists a sequence $U_1,...,U_k$ with each $U_j$ either a 2-level unitary of type $\begin{pmatrix}
        0 & 1\\ 1 & 0
    \end{pmatrix} $ of a 1-level unitary of type $\omega$ and $-1$ such that $U_1...U_k\uu = (1)^t$ or $(1,0)^t$ for $m=1$ and $m=2$ respectively.
\end{corollary}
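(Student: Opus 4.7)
The plan is to combine Lemmas \ref{lemma:m=1,2} and \ref{sde0} to force $\uu$ into a very rigid form, and then to exhibit a short explicit sequence of the allowed elementary operations that normalizes it.

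First, apply Lemma \ref{lemma:m=1,2} to conclude that $\mathrm{sde}(\uu) = 0$, so in fact $\uu \in \Z[\omega]^m$. Then invoke Lemma \ref{sde0} (whose proof uses only the unit-norm condition and so applies for any $m \geq 1$) to deduce that $\uu$ is some permutation of $(\pm\omega^a, 0, \ldots, 0)^t$ for some integer $a$.

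For $m = 1$, this means $\uu = (\pm\omega^a)$: I would apply the $1$-level unitary of type $\omega$ exactly $(-a) \bmod 3$ times to obtain $(\pm 1)$, and then apply the $1$-level unitary of type $-1$ at most once to reach $(1)$. For $m = 2$, if the nonzero entry of $\uu$ already lies in the first coordinate, proceed as in the $m=1$ case at position $1$; otherwise first apply the $2$-level swap at positions $[1,2]$ to move the nonzero entry into the first slot, and then do the same normalization to reach $(1, 0)^t$.

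I do not expect a serious obstacle: the corollary is essentially a repackaging of the two preceding lemmas, together with the elementary observation that the six-element set $\{\pm\omega^a : a \in \Z\}$ is generated inside $\U_1(\C)$ by multiplication by $\omega$ and multiplication by $-1$, so finitely many $1$-level operations always suffice to send $\pm\omega^a$ to $1$. The only bookkeeping point is the $m=2$ case, where one must be careful to bring the unique nonzero entry into position $1$ via the swap before performing the phase normalization at position $1$.
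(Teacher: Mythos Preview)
Your proposal is correct and follows essentially the same approach as the paper: apply Lemma~\ref{lemma:m=1,2} to force $\mathrm{sde}(\uu)=0$, then Lemma~\ref{sde0} to pin down the form of $\uu$, and finish with the elementary normalization by swaps and $1$-level phases. The only difference is that you spell out the explicit sequence of $\omega$- and $(-1)$-moves, whereas the paper leaves this final step implicit.
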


\begin{proof}
    By Lemma \ref{lemma:m=1,2} we get that sde$(\uu)=0$. Using Lemma \ref{sde0}, we get that $\uu = (\pm\omega^a)$ or $\uu = (\pm\omega^a,0)$ or $(0, \pm\omega^a)$ respectively for $m=1$ and $2$. Therefore, we can use a sequence of unitaries of the form $\begin{pmatrix}
        0 & 1\\ 1 & 0
    \end{pmatrix} $ or 1-level unitaries of type $\omega$ and $-1$ to convert $\uu$ into $(1,0)$ or $(1)$ respectively for $m=1$ and 2.
\end{proof}

\begin{lemma} \label{lemma:inner_induction}
    Let $m\ge 3$ and let $\uu \in \Z[\frac{1}{3},\omega]^m$ be a unit vector with sde $f>0$. Then there exists a sequence $U_1,...,U_k$ with each $U_j$ either a 3-level unitary of type $X, H$ and $S$ or a 1-level unitary of type $-1$ such that the sde of all the entries of the resulting column vector $U_1,...,U_k\uu$ is at most $f-1$.
\end{lemma}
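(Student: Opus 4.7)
The plan is to reduce the sde by one by combining a normalization step with a Fourier-type reduction on triples, following the derivative-mod-$3$ trick. Write $\chi^f\uu=\vv$ with $\vv\in\Z[\omega]^m$ and $\vv\notin\chi\Z[\omega]^m$. Since $\|\uu\|=1$, $\|\vv\|^2=|\chi|^{2f}=3^f$. Applying $\Pmap$ to $\sum_j v_j\bar v_j=3^f$, and using that $\Pmap$ is compatible with complex conjugation (both $\omega$ and $\bar\omega$ map to $1$), yields $\sum_j\Pmap(v_j)^2\equiv 0\pmod 3$. Nonzero squares in $\Z/3\Z$ equal $1$, so $J=\{j:\Pmap(v_j)\ne 0\}$ has cardinality a positive multiple of $3$. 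This is where the hypothesis $m\geq 3$ enters: otherwise $|J|\leq m<3$ could not be a positive multiple of $3$.

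First I would normalize the nonzero residues: for each $j$ with $\Pmap(v_j)=2$, apply the 1-level unitary $(-1)_{[j]}$; since $\Pmap(-v_j)=-\Pmap(v_j)$, every $j\in J$ then satisfies $\Pmap(v_j)=1$. Partition $J$ into disjoint triples $T_1,\ldots,T_k$ and process each triple independently; since the subsequent operations are local to their indexed triple, the triples do not interfere.

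For a triple $T=\{j_1,j_2,j_3\}$, write $v_{j_\ell}=1+\chi a_\ell$ with $a_\ell\in\Z[\omega]$. Because $\omega v_{j_\ell}=1+\chi(-1+\omega a_\ell)$, a 3-level $S$ placed in a suitable ordering of the triple multiplies one chosen $v_{j_\ell}$ by $\omega$, shifting $\Pmap(a_\ell)$ by $-1\pmod 3$. Using such 3-level $S$'s (in the three orderings of $T$, or conjugated by 3-level $X$'s), I can independently adjust each of $\Pmap(a_1)$, $\Pmap(a_2)$, $\Pmap(a_3)$, and in particular arrange $\Pmap(a_1)+\Pmap(a_2)+\Pmap(a_3)\equiv 0\pmod 3$. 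Now apply $H_{[j_1,j_2,j_3]}$: using $1/\sqrt{-3}=\omega^2/\chi$, $3=-\omega^2\chi^2$, and $1+\omega+\omega^2=0$, one computes
\[
F\begin{pmatrix}v_{j_1}\\ v_{j_2}\\ v_{j_3}\end{pmatrix}
=\chi\begin{pmatrix}-\omega^2\chi+a_1+a_2+a_3\\ a_1+\omega a_2+\omega^2 a_3\\ a_1+\omega^2 a_2+\omega a_3\end{pmatrix},
\]
where $F$ is the unnormalized $3\times 3$ Fourier matrix. Each component on the right has $\Pmap=\sum_\ell\Pmap(a_\ell)\equiv 0\pmod 3$, hence lies in $\chi\Z[\omega]$, so $F(v_{j_1},v_{j_2},v_{j_3})^t\in\chi^2\Z[\omega]^3$ and $H(v_{j_1},v_{j_2},v_{j_3})^t=\omega^2\chi\,\mathbf{w}$ for some $\mathbf{w}\in\Z[\omega]^3$. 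Dividing by $\chi^f$ gives $\chi^{f-1}u'_{j_\ell}=\omega^2 w_\ell\in\Z[\omega]$. For $j\notin J$, $\chi^{f-1}u'_j=v_j/\chi\in\Z[\omega]$ since $\Pmap(v_j)=0$. Thus every entry of the new column has sde at most $f-1$.

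The main obstacle is the $\chi^2$-divisibility of $F(v_{j_1},v_{j_2},v_{j_3})^t$: one factor of $\chi$ is automatic from $1+\omega+\omega^2=0$ together with $3\in\chi^2\Z[\omega]$, but the second factor requires the $S$-preconditioning so that $\sum_\ell\Pmap(a_\ell)\equiv 0\pmod 3$. A secondary check is that 3-level $S$-gates in the three orderings of a triple (together with 3-level $X$ gates) genuinely allow independent adjustments of all three $\Pmap(a_\ell)$; this follows at once from $S=\mathrm{diag}(1,1,\omega)$ acting on a single indexed coordinate.
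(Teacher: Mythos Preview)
Your proof is correct and follows essentially the same strategy as the paper: count the nonzero residues modulo $3$, normalize them to $1$ with $1$-level $(-1)$'s, precondition each triple with $S$-gates, then apply $H$ to gain the needed factor of $\chi^2$. The only cosmetic difference is that the paper phrases the $\chi^2$-divisibility check via the formal-derivative criterion ($\Pmap(g)=\Pmap(g')=0$) and therefore applies a single $S^a$ to one coordinate of the triple with $a$ chosen so that $a+\sum_{j}\Pmap(v_j')\equiv 0$, whereas you write $v_{j_\ell}=1+\chi a_\ell$ and work directly with the $\chi$-adic expansion; since $\Pmap(v_j')=-\Pmap(a_j)$, the two conditions coincide, and your preconditioning only needs to adjust one coordinate, just as in the paper.
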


\begin{proof}
    We have $\chi^f\uu = \vv$ for some $\vv = (v_1,...,v_m)^t  \in \Z[\omega]^m$ but $\chi^{f-1}\uu \notin \Z[\omega]^m$, i.e. there exists at least one $j$ such that $v_j \notin \chi\Z[\omega]$. Since $\|\uu\|=1$, we have $$ \sum_{j=1}^m |v_j|^2 = |\chi|^{2f}.$$ We apply the map $\Pmap$ to get that $$\sum_{j=1}^m \Pmap(v_j)^2 = 0  .$$
    Since $\Pmap(v_j)^2$ can be 0 or 1, therefore we see that the number of $j \in \{1,...,m\}$ such that $\Pmap(v_j)^2 =1$ is a multiple of $3$, i.e. the number of $j$ such that $v_j \notin \chi\Z[\omega]$ is $3l$ for some $l\ge1$. In other words, the number of entries of $\uu$ which has sde strictly greater than $f-1$ is of the form $3l$ for some $l\ge 1$. We use induction on $l$ to prove this lemma.\\
   
 \textbf{Base case} $l=1$: There exists $j_1,j_2,j_3 \in \{1,...,m\}$ (denote $\I=\{j_1,j_2,j_3\}$) such that $\Pmap(v_j)^2=1$ for all $j \in \I$ and $\Pmap(v_j)^2 =0$ for all $j \notin \I$ (i.e. $\frac{v_j}{\chi} \in \Z[\omega]$ for all $j \notin \I$). First notice that $\Pmap(v_j) = 1 $ or $-1$ for all $j \in \I$. If $\Pmap(v_{j_i})=-1$ for some $j_i \in \I$, we can use a one level unitary $[-1]_{[j_i]}$ so that $\Pmap(v_{j_i})=1$. Therefore after the application of the above 3-level matrices, we have $\Pmap(v_j)=1$ for all $j \in \I$. Let $\Tilde{\vv} $ be the column of length 3 with entries $v_{j_1}, v_{j_2}$ and $v_{j_3}$ in its 1st, 2nd and 3rd entries respectively. Let $\Tilde{\uu} = \frac{\Tilde{\vv}}{\chi^f}$, then 
        $$HS^a\Tilde{\uu} =   \frac{\omega^2}{\chi^{f+1}} \begin{pmatrix}
        v_{j_1}+   v_{j_2} +   \omega^av_{j_3} \\
        v_{j_1}+  \omega v_{j_2} +   \omega^{a+2} v_{j_3}\\
        v_{j_1}+  \omega^2 v_{j_2} +   \omega^{a+1} v_{j_3}
        \end{pmatrix}$$
        Let $g_1 = v_{j_1}+   v_{j_2} +   \omega^av_{j_3}$, $g_2 = v_{j_1}+  \omega v_{j_2} +   \omega^{a+2} v_{j_3}$ and $v_{j_1}+  \omega^2 v_{j_2} +   \omega^{a+1} v_{j_3} $ in $\Z[\omega]$. Then we see that $\Pmap(g_1)= \Pmap(g_2) = \Pmap(g_3) = 0  $.  We will choose $a \in \Z/3\Z$ such that $\frac{g_j}{\chi^2} \in \Z[\omega]$ for each $j\in \{1,2,3\}$. We have
        \begin{align*}
            g_1'(\omega) & = v'_{j_1}+ v_{j_2}' + a\omega^{a-1}v_{j_3} + \omega^av_{j_3}'\\
            \Pmap(g_1') & = a+ \sum_{j\in \mathcal{I} } \Pmap(v_{j}') \\
            g_2'(\omega) & = v'_{j_1}+ v_{j_2}+ \omega v'_{j_2}+ (a+2)\omega^{a+1}v_{j_3} + \omega^{a+2}v_{j_3}' \\ 
            \Pmap(g_2') & = a+ 3 +\sum_{j\in \mathcal{I} } \Pmap(v_{j}') =\Pmap(g_1')   \\
            g_3'(\omega) & = v'_{j_1}+  2\omega v_{j_2}+ \omega^2 v_{j_2}' + (a+1)\omega^{a}v_{j_3} + \omega^{a+1}v_{j_3}'\\ 
            \Pmap(g_3') & = a+ 3 +\sum_{j\in \mathcal{I} } \Pmap(v_{j}') = \Pmap(g_1')  
        \end{align*}
        We choose $a \in \Z/3\Z$ such that $a+ \sum_{j\in \mathcal{I} } \Pmap(v_{j}') = 0$, then $\Pmap(g_j') = 0$ for $j \in \{1,2,3\}$. Combining with $\Pmap(g_j) = 0$, we see that $\frac{g_j}{\chi^2} \in \Z[\omega]$. Therefore we have 
        $$HS^a\Tilde{\uu} =   \frac{\omega^2}{\chi^{f-1}} \begin{pmatrix}
        \frac{g_1}{\chi^2} \\
        \frac{g_2}{\chi^2}\\
        \frac{g_3}{\chi^2}
        \end{pmatrix} \text{ with } \frac{g_j}{\chi^2} \in \Z[\omega], \text{ i.e } \chi^{f-1}HS^a\Tilde{\uu} \in \Z[\omega]^m.$$
        Therefore all entries of $HS^a\Tilde{\uu}$ have sde at most $f-1$. 
        
        \textbf{Induction step:} For the induction step we pick any 3 of the indices $j_1,j_2,j_3 \in \{1,...,m\}$ such that $\Pmap(v_{j_1}), \Pmap(v_{j_2})$ and $\Pmap(v_{j_3})$ are non-zero and use the base case algorithm to lower $l$ by 1. Then from the induction hypothesis, the statement of the lemma follows.
\end{proof}

\begin{lemma} \label{column_lemma}
    Let $m\ge 3$ and $\uu \in \Z[\frac{1}{3},\omega]^m$ be an $m$-dimensional column vector of norm 1. Then there exists a sequence $U_1,...,U_k$ of 3-level unitaries of type $X,S$ and $H$; and 1-level unitaries of type $\omega$ and $-1$ such that $U_1...U_k\uu = (1,0,...,0)^t$.
\end{lemma}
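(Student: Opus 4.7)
The plan is to prove the lemma by induction on the smallest denominator exponent $f=\mathrm{sde}(\uu)$, using Lemma \ref{lemma:inner_induction} as the engine for the inductive step and Lemma \ref{sde0} for the base case. Since the previous two lemmas handle the genuinely arithmetic content (the mod-$\chi$ and mod-$\chi^2$ analysis that brings the denominator down, and the classification of norm-$1$ vectors with denominator $0$), this lemma is essentially a bookkeeping combination, and the only new work is (i) driving the induction, and (ii) cleaning up once the sde reaches $0$.

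For the base case $f=0$, Lemma \ref{sde0} gives that $\uu$ is, up to a permutation of coordinates, of the form $(\pm\omega^a,0,\ldots,0)^t$ for some $a\in\Z$. I would first use 3-level unitaries of type $X$ and $H$ to implement an appropriate permutation of the coordinates that moves the unique nonzero entry into position $1$; this is possible because, by Remark \ref{rem:H,X_gen_S_3}, $X$ and $H^2$ generate $S_3$ on any three chosen coordinates, so the 3-level building blocks of type $X$ and $H$ realize every coordinate transposition and hence every permutation in $S_m$. Then I would apply a 1-level unitary of type $\omega$ a suitable number of times to cancel $\omega^a$, and if necessary a 1-level unitary of type $-1$ to cancel the sign, ending at $(1,0,\ldots,0)^t$.

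For the inductive step, suppose $f=\mathrm{sde}(\uu)>0$. Apply Lemma \ref{lemma:inner_induction} directly to obtain a sequence $U_1,\ldots,U_k$ of 3-level unitaries of type $X,H,S$ and 1-level unitaries of type $-1$ such that every entry of $U_1\cdots U_k\uu$ has sde at most $f-1$; in particular $\mathrm{sde}(U_1\cdots U_k\uu)\le f-1$. Because $U_1\cdots U_k\uu$ is still a norm-$1$ vector in $\Z[\tfrac13,\omega]^m$, the inductive hypothesis supplies a further sequence of allowed gates reducing it to $(1,0,\ldots,0)^t$, and concatenating the two sequences completes the step.

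There is no real obstacle, since the delicate derivative-mod-$3$ argument is entirely inside Lemma \ref{lemma:inner_induction}; the only minor point to verify carefully is that in the base case I am allowed to realize an arbitrary coordinate permutation using only the listed 3-level types. This follows from Remark \ref{rem:H,X_gen_S_3} by decomposing any element of $S_m$ into transpositions, each of which is a 2-cycle inside some triple of coordinates and hence realizable as a 3-level unitary of type $H^2$ (acting trivially on the third coordinate of the triple).
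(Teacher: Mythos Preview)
Your proposal is correct and follows essentially the same approach as the paper: induct on $\mathrm{sde}(\uu)$, invoke Lemma~\ref{lemma:inner_induction} for the inductive step, and use Lemma~\ref{sde0} plus a permutation and $1$-level phase fixes for the base case. The only cosmetic difference is that the paper observes a single $3$-level unitary of type $H^2$ suffices to move the unique nonzero entry to the first coordinate (one transposition is enough), whereas you argue more generally that arbitrary permutations are realizable; either way the argument goes through.
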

\begin{proof}
    Suppose the sde of $\uu=(u_1,...,u_m)^t \in \Z[\frac{1}{3},\omega]^m$ is $f$. If $f$ is non-zero, then we can apply an appropriate 3-level unitary matrix $U_k$ on $u$ to drop the sde of $\uu$ by at least 1 
    by Lemma \ref{lemma:inner_induction}. We can inductively proceed this way to find unitaries $U_1,\ldots, U_k$ such that $U_1\ldots U_k u$ has sde $=0$. By  Lemma \ref{sde0}, this column of sde $0$ is equal to $(\pm \omega^a, 0,0,...,0)$ up to permutation. We use a 3-level unitary of type $H^2$ and  1-level unitaries of type $\omega$ and $-1$ to get $(1,0,....,0)^t$. \\

\end{proof}

\begin{lemma} \label{lemma:step1_to_a_unitary}
    Let $U \in \U_m(\Z[\frac{1}{3},\omega])$. Then there exists a sequence $U_1,...,U_k$ of 3-level unitaries of type $X, S$ and $H$; 2-level unitaries of type $\begin{pmatrix}
        0 & 1\\ 1 & 0
    \end{pmatrix} $; and 1-level unitaries of type $\omega$ and -1 such that $U_1....U_kU=I$.
\end{lemma}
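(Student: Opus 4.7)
The plan is induction on the dimension $m$, reducing $U$ one column at a time in the style of Giles--Selinger. For the induction step with $m \ge 3$, I first apply Lemma \ref{column_lemma} to the first column of $U$: this produces a sequence $V_1, \ldots, V_l$ of 3-level unitaries of types $X, S, H$ together with 1-level unitaries of types $\omega$ and $-1$ such that $V_1 \cdots V_l \cdot U e_1 = e_1$. Setting $W := V_1 \cdots V_l U$, unitarity together with $W e_1 = e_1$ forces the first row of $W$ to equal $e_1^t$, since $W_{1j} = \langle e_1, W e_j\rangle = \langle We_1, We_j\rangle = \delta_{1j}$. Thus $W$ has the block form $\bigl(\begin{smallmatrix} 1 & 0 \\ 0 & U' \end{smallmatrix}\bigr)$ with $U' \in \U_{m-1}(\Z[\frac{1}{3},\omega])$.

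Applying the induction hypothesis to $U'$, I obtain an $(m-1)$-dimensional sequence of gates of the required types whose product reduces $U'$ to $I_{m-1}$. Each such gate extends to an $m$-dimensional gate of the same type by acting as identity on the first coordinate: a 3-level unitary of type $X, S, H$ in dimension $m-1$ becomes a 3-level unitary of the same type in dimension $m$, and similarly for the 2-level swap and the 1-level gates. Composing these extensions with the $V_j$ gives the full reduction of $U$ to the identity.

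For the base cases, columns of a unitary have norm $1$, so by Lemma \ref{lemma:m=1,2} they have sde $0$, and by Lemma \ref{sde0} each is (up to permutation) of the form $(\pm \omega^a, 0, \ldots, 0)^t$. For $m = 1$, $U$ is a scalar $\pm \omega^a$ which is cleared by 1-level gates of types $\omega$ and $-1$. For $m = 2$, Corollary \ref{corollary:m=1,2} reduces the first column of $U$ to $(1,0)^t$ using a 2-level swap and 1-level gates; unitarity then forces the remaining $(2,2)$ entry to be a unit scalar, which by the same reasoning equals $\pm\omega^b$ and is cleared with one more 1-level gate.

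The main ``obstacle'' here is simply bookkeeping; the technical work has already been done in Lemmas \ref{sde0}--\ref{column_lemma}, which handle the denominator-reduction via the $\mathcal{P}$-map and the $\chi$-filtration. What requires any care at all is only that the gate families listed in the statement are closed under the trivial extension from dimension $m-1$ to dimension $m$, and this is immediate from the definitions of $1$-, $2$-, and $3$-level matrices.
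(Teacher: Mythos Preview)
Your proof is correct and follows essentially the same approach as the paper: induction on $m$, with Corollary~\ref{corollary:m=1,2} handling the base cases $m\le 2$ and Lemma~\ref{column_lemma} reducing the first column in the induction step so that unitarity yields the block form $\bigl(\begin{smallmatrix}1 & 0\\ 0 & U'\end{smallmatrix}\bigr)$. You supply a bit more detail (the explicit argument that the first row must also be $e_1^t$, and the observation that the gate families are closed under extension from dimension $m-1$ to $m$), but the structure is identical; one tiny slip is that for $m=2$ the residual $(2,2)$-entry $\pm\omega^b$ may need more than a single 1-level gate to clear.
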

\begin{proof}
    We prove this using induction on $m$. By Corollary \ref{corollary:m=1,2}, the base case is true. Now, let $\uu $ be the first column of $U$. Then using Lemma \ref{column_lemma} and Corollary \ref{corollary:m=1,2}, there exists a sequence $U_1,...,U_{k_1}$ of 3-level unitaries of type $X, S$ and $H$; 2-level unitaries of type $\begin{pmatrix}
        0 & 1\\ 1 & 0
    \end{pmatrix} $; and 1-level unitaries of type $\omega$ and $-1$ such that $$U_1...U_{k_1}\uu = (1,0,...,0)^t.$$
    Therefore we have 
    $$U_1...U_{k_1}U = \left(\begin{array}{c|c}
        1 & 0 \\
        \hline 
        0 & U'
    \end{array}\right).$$
    Using the induction hypothesis on the unitary $U'$ of size $m-1$, the statement of the lemma follows.
\end{proof}

\begin{lemma} \label{level_matrices_lemma}
    Let $U \in \U_m(\Z[\frac{1}{3},\omega])$. Then there exists a sequence $U_1,...,U_k$ of 3-level unitaries of type $X, S$ and $H$; 2-level unitaries of type $\begin{pmatrix}
        0 & 1\\ 1 & 0
    \end{pmatrix} $; and 1-level unitaries of type $\omega$ and -1 such that $U=U_1....U_k$.
\end{lemma}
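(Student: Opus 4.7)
The plan is to derive this corollary directly from Lemma \ref{lemma:step1_to_a_unitary} by a single inversion step. The key observation is that $\U_m(\Z[\frac{1}{3},\omega])$ is closed under taking inverses: if $U$ is unitary with entries in $\Z[\frac{1}{3},\omega]$, then $U^{-1}=U^*$ also has entries in this ring, since complex conjugation fixes $\Z[\frac{1}{3},\omega]$ (because $\overline{\omega}=\omega^2 \in \Z[\frac{1}{3},\omega]$).

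First I would apply Lemma \ref{lemma:step1_to_a_unitary} to $U^{-1}\in \U_m(\Z[\frac{1}{3},\omega])$ in place of $U$. This immediately produces a sequence $U_1,\ldots,U_k$ of 3-level unitaries of type $X,S,H$, 2-level unitaries of type $\begin{pmatrix} 0 & 1\\ 1 & 0\end{pmatrix}$, and 1-level unitaries of type $\omega$ and $-1$, satisfying
\[
U_1 \cdots U_k\, U^{-1} = I,
\]
so that $U = U_1\cdots U_k$, as desired.

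An equally short alternative (avoiding the appeal to the closure of the ring under conjugation) is to apply Lemma \ref{lemma:step1_to_a_unitary} to $U$ itself, obtaining $V_1\cdots V_k U = I$, and then write $U = V_k^{-1}\cdots V_1^{-1}$. In that case one needs the trivial observation that each allowed level-matrix type is closed under inversion within the same list of generators: $X^{-1}=X^2$, $S^{-1}=S^2$, and $H^{-1}=H^3$ (since Remark \ref{rem:H,X_gen_S_3} shows $H^2=X_{(2\;3)}$ has order $2$, hence $H$ has order $4$), while the swap $\begin{pmatrix} 0 & 1\\ 1 & 0\end{pmatrix}$ and the scalar $-1$ are self-inverse and $\omega^{-1}=\omega^2$. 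Either route gives the conclusion in essentially one line, so there is no real obstacle: the content of the claim is contained in Lemma \ref{lemma:step1_to_a_unitary}, and this final statement is just its contrapositive packaging into a factorization of $U$.
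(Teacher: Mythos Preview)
Your proposal is correct, and your second alternative is exactly the paper's proof: apply Lemma \ref{lemma:step1_to_a_unitary} to $U$ and invert each factor using $X^{-1}=X^2$, $S^{-1}=S^2$, $H^{-1}=H^3$, $\begin{pmatrix}0&1\\1&0\end{pmatrix}^{-1}=\begin{pmatrix}0&1\\1&0\end{pmatrix}$, $\omega^{-1}=\omega^2$, $(-1)^{-1}=-1$. Your first alternative (applying the lemma to $U^{-1}$) is an equally valid one-line variant.
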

\begin{proof}
A consequence of Lemma \ref{lemma:step1_to_a_unitary} after observing that $X^{-1}=X^2, S^{-1}=S^2, H^{-1}=H^3, \begin{pmatrix}
0 & 1\\ 1 & 0
\end{pmatrix}^{-1}=\begin{pmatrix}
0 & 1\\ 1 & 0
\end{pmatrix}, \omega^{-1}=\omega^2$ and $(-1)^{-1} = -1$.    
\end{proof}

\subsection{Conversion to controlled gates}

Our next task is to convert these level-unitaries to controlled gates. Let us first define what they are and observe that when the level-unitaries are of a particular form, then they already are useful controlled gates.

\begin{defn}[Controlled gates]
Let $U \in \U_3$ be a single-qutrit gate and $x_1,....,x_n \in \{0,1,2\}$. Then the $n$-qutrit gate $\ket{x_1....x_{j-1}}\otimes \ket{x_{j+1}...x_n}$-controlled-$U$ is defined to be the unitary having the following action:
\[
\ket{y_1...y_n} \mapsto \begin{cases}
\ket{x_1....x_{j-1}}\otimes (U\ket{y_j}) \otimes\ket{x_{j+1}...x_n} & \text{ when } (y_1,...,y_{j-1},y_{j+1},...,y_n)\\
& \text{\quad}\;\; =(x_1,...,x_{j-1},x_{j+1},...,x_n) \\
\ket{y_1...y_n} & \text{otherwise.}
\end{cases}
\]    
Alternatively, we also call this gate as the $(n-1)$-controlled-$U$ gate with $j$-th target wire and with controls $x_1,...,x_{j-1},x_{j+1},...,x_n$. 
\end{defn}

\begin{remark} \label{ket(2)controlled}
Note that if the control qutrit is $\ket{0}$, then conjugating the control wire with $X^2$ and if the control qutrit is $\ket{1}$, then conjugating the control wire with $X$, we can convert any controlled gate to a $\ket{2}$-controlled gate.
\end{remark}

\begin{obs} \label{obs:Gray}
Let $M \in \U_{3^n}(\Z[\frac{1}{3},\omega])$ be a 3-level unitary of type $U \in \U_3(\Z[\frac{1}{3},\omega])$. We rename the row and column indices of the matrix $M$ as a $n$-tuple $\Vec{P}=(p_1,...,p_n) \in \{0,1,2\}^n$ with lexicographic ordering.

Define $\Delta^n$ to be the $n$-dimensional grid in $\R^n$ with vertices the set of points $\{(x_1,...,x_n):x_1,...,x_n\in  \{0,1,2\}\}$ and there is an edge between two vertices if and only if the two vertices differ exactly at one coordinate and where they differ, the difference between the coordinates is exactly 1
($\Delta^2$, for example, is the grid shown in Figures \ref{fig:controlledU} \ref{fig:controlledXU}, \ref{fig:3} and \ref{fig:4}). We observe that if the three points $\overrightarrow{P_0},\overrightarrow{P_1},$ and $\overrightarrow{P_2}$ are in a straight line along an edge of $\Delta^n$ (i.e. $n-1$ of the corresponding coordinates are the same for all the three points), then $M$ is already an $(n-1)$-multiply-controlled-$(X_{\sigma}^{-1}UX_{\sigma})$ gate for some $\sigma \in S_3$. We illustrate this for $n=2$ in Figures \ref{fig:controlledU} and \ref{fig:controlledXU}.
\end{obs}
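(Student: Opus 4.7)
The plan is to unpack the statement directly from the definition of a 3-level unitary, keeping track of three parallel labelings: the big-matrix indices in $\{0,1,2\}^n$, the internal $\{0,1,2\}$ indices for the rows and columns of $U$, and the qutrit computational basis. The main obstacle is purely bookkeeping between these three labelings; once the right permutation is identified, both halves of the controlled-gate behaviour fall out of the definitions.

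First, I would identify the target wire and the underlying permutation. Since $\overrightarrow{P_0},\overrightarrow{P_1},\overrightarrow{P_2}$ share $n-1$ coordinates and are pairwise distinct, they disagree on a unique coordinate $j$; the hypothesis that they lie along an edge of $\Delta^n$ forces their $j$-th coordinates to be three distinct elements of $\{0,1,2\}$, i.e.\ to exhaust that set. Define $\sigma \in S_3$ by $\sigma(i)=(\overrightarrow{P_i})_j$ and let $c_k = (\overrightarrow{P_0})_k = (\overrightarrow{P_1})_k = (\overrightarrow{P_2})_k$ for $k \ne j$ be the common control values.

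Next, I would verify the two halves of the multiply-controlled behaviour on an arbitrary computational basis vector $\ket{y_1\cdots y_n}$. If $y_k \ne c_k$ for some $k \ne j$, then $\ket{y_1\cdots y_n}$ is orthogonal to each of $\ket{\overrightarrow{P_0}}, \ket{\overrightarrow{P_1}}, \ket{\overrightarrow{P_2}}$, so by the definition of a 3-level unitary $M$ acts as the identity on it; this matches the "controls not satisfied" branch of an $(n-1)$-multiply-controlled gate with controls $(c_k)_{k \ne j}$. If instead $y_k = c_k$ for every $k \ne j$, then $\ket{y_1\cdots y_n} = \ket{\overrightarrow{P_{\sigma^{-1}(y_j)}}}$, and the 3-level action sends this to $\sum_l U_{l,\sigma^{-1}(y_j)} \ket{\overrightarrow{P_l}} = \sum_l U_{l,\sigma^{-1}(y_j)} \ket{y_1\cdots \sigma(l) \cdots y_n}$, confirming that only wire $j$ is disturbed while the others retain their values $c_k$.

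Finally, I would identify this single-qutrit action on wire $j$ as a conjugate of $U$ by a permutation gate. Reading off the matrix element $\langle y' | \cdot | y_j\rangle$ from the expression above yields $U_{\sigma^{-1}(y'),\sigma^{-1}(y_j)}$, which is precisely the $(y',y_j)$ entry of $X_\sigma U X_\sigma^{-1}$, where $X_\sigma \ket{i} = \ket{\sigma(i)}$ is the permutation gate from the earlier notation. Setting $\tau = \sigma^{-1}$ rewrites this as $X_\tau^{-1} U X_\tau$, matching the form stated in the observation and completing the identification of $M$ with an $(n-1)$-multiply-controlled-$(X_\tau^{-1} U X_\tau)$ gate on target wire $j$ with controls $(c_k)_{k \ne j}$.
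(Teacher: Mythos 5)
Your proof is correct and supplies precisely the bookkeeping that the paper leaves implicit: the paper states this as an unproved observation and only illustrates it via the $n=2$ examples in Figures \ref{fig:controlledU} and \ref{fig:controlledXU} (where the conjugate $XUX^2$ appears). Your identification of the unique target coordinate $j$, the permutation $\sigma(i)=(\overrightarrow{P_i})_j$, and the resulting single-qutrit action $X_{\sigma}UX_{\sigma}^{-1}=X_{\sigma^{-1}}^{-1}UX_{\sigma^{-1}}$ on wire $j$ is exactly the computation underlying those figures, so the two arguments coincide in substance.
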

\begin{figure}[H]
  $U_{[(0,2),(1,2),(2,2)]} = \begin{blockarray}{(ccc|ccc|ccc)}
    1 & 0 & 0 & 0 & 0 & 0 & 0 & 0 & 0\\
    0 & 1 & 0 & 0 & 0 & 0 & 0 & 0 & 0\\
    0 & 0 & a_1 & 0 & 0 & a_2 & 0 & 0 & a_3\\ \cline{1-9}
    0 & 0 & 0 & 1 & 0 & 0 & 0 & 0 & 0\\
    0 & 0 & 0 & 0 & 1 & 0 & 0 & 0 & 0\\
    0 & 0 & b_1 & 0 & 0 & b_2 & 0 & 0 & b_3\\ \cline{1-9}
    0 & 0 & 0 & 0 & 0 & 0 & 1 & 0 & 0\\
    0 & 0 & 0 & 0 & 0 & 0 & 0 & 1 & 0\\
    0 & 0 & c_1 & 0 & 0 & c_2 & 0 & 0 & c_3\\ 
  \end{blockarray}$
    \centering
\begin{tikzpicture}[scale=0.45]
	\begin{pgfonlayer}{nodelayer}
		\node [style=new style 0, inner sep=0mm, minimum size=2mm] (0) at (-10, 7) {};
		\node [style=new style 0, inner sep=0mm, minimum size=2mm] (1) at (-8, 7) {};
		\node [style=new style 0, inner sep=0mm, minimum size=2mm] (2) at (-6, 7) {};
		\node [style=new style 0, inner sep=0mm, minimum size=2mm] (3) at (-10, 5) {};
		\node [style=new style 0, inner sep=0mm, minimum size=2mm] (4) at (-8, 5) {};
		\node [style=new style 0, inner sep=0mm, minimum size=2mm] (5) at (-6, 5) {};
		\node [style=new style 0, inner sep=0mm, minimum size=2mm] (6) at (-10, 3) {};
		\node [style=new style 0,inner sep=0mm, minimum size=2mm] (7) at (-8, 3) {};
		\node [style=new style 0, inner sep=0mm, minimum size=2mm] (8) at (-6, 3) {};
		\node [style=none] (9) at (-11, 7.5) {\tiny $\overrightarrow{P_{0}} = (0,2)$};
		\node [style=none] (10) at (-8, 8) {\tiny $\overrightarrow{P_{1}}=(1,2)$};
		\node [style=none] (11) at (-5.5, 7.5) {\tiny $\overrightarrow{P_{2}}=(2,2)$};
		\node [style=none] (12) at (5, 6.5) {};
		\node [style=none] (13) at (9, 6.5) {};
		\node [style=none] (14) at (5, 3.5) {};
		\node [style=none] (15) at (9, 3.5) {};
		\node [style=new style 2] (16) at (7, 6.5) {$U$};
		\node [style=new style 1, inner sep=0.5mm, minimum size=2mm] (17) at (7, 3.5) {$2$};
		\node [style=none] (18) at (-10, 7) {};
		\node [style=none] (19) at (-0.5, 9) {a) The initial 3-level unitary};
		\node [style=none] (20) at (-8, 1.75) {b) The points are already in a straight line and };
		\node [style=none] (21) at (-8, 1) {the rows are already in the correct order};
		\node [style=none] (22) at (7, 1.75) {c) The 3-level unitary is equal};
		\node [style=none] (23) at (7, 1) { to the above controlled-$U$ gate};
	\end{pgfonlayer}
	\begin{pgfonlayer}{edgelayer}
		\draw (0) to (3);
		\draw (3) to (6);
		\draw (6) to (7);
		\draw (7) to (4);
		\draw (3) to (4);
		\draw (0) to (1);
		\draw (1) to (4);
		\draw (4) to (5);
		\draw (1) to (2);
		\draw (2) to (5);
		\draw (5) to (8);
		\draw (7) to (8);
		\draw [in=90, out=-90] (16) to (17);
		\draw (14.center) to (17);
		\draw (17) to (15.center);
		\draw (12.center) to (16);
		\draw (16) to (13.center);
	\end{pgfonlayer}
\end{tikzpicture}

\caption{The 3-level unitary $U_{[(0,2),(1,2),(2,2)]}$ is already a controlled-$U$ gate}
\label{fig:controlledU}
\end{figure}

\begin{figure}[H]
    \centering
    $U_{[(1,2),(2,2),(0,2)]} = \begin{blockarray}{(ccc|ccc|ccc)}
    1 & 0 & 0 & 0 & 0 & 0 & 0 & 0 & 0\\
    0 & 1 & 0 & 0 & 0 & 0 & 0 & 0 & 0\\
    0 & 0 & c_3 & 0 & 0 & c_1 & 0 & 0 & c_2\\ \cline{1-9}
    0 & 0 & 0 & 1 & 0 & 0 & 0 & 0 & 0\\
    0 & 0 & 0 & 0 & 1 & 0 & 0 & 0 & 0\\
    0 & 0 & a_3 & 0 & 0 & a_1 & 0 & 0 & a_2\\ \cline{1-9}
    0 & 0 & 0 & 0 & 0 & 0 & 1 & 0 & 0\\
    0 & 0 & 0 & 0 & 0 & 0 & 0 & 1 & 0\\
    0 & 0 & b_3 & 0 & 0 & b_1 & 0 & 0 & b_2\\ 
  \end{blockarray}$
\begin{tikzpicture}[scale=0.4]
	\begin{pgfonlayer}{nodelayer}
		\node [style=new style 2] (0) at (-9.25, -4) {$XUX^2$};
		\node [style=new style 1, inner sep=0.5mm] (1) at (-9.25, -7) {$2$};
		\node [style=none] (2) at (-11.25, -4) {};
		\node [style=none] (3) at (-11.25, -7) {};
		\node [style=none] (4) at (-7.25, -4) {};
		\node [style=none] (5) at (-7.25, -7) {};
		\node [style=none] (6) at (-4.75, -5.5) {$=$};
		\node [style=none] (7) at (-2.25, -4) {};
		\node [style=none] (8) at (-2.25, -7) {};
		\node [style=new style 2] (9) at (3, -4) {$U$};
		\node [style=new style 2] (10) at (6, -4) {$X$};
		\node [style=none] (11) at (8, -4) {};
		\node [style=none] (12) at (8, -7) {};
		\node [style=new style 0, inner sep=0mm, minimum size=2mm] (15) at (-14, 5) {};
		\node [style=new style 0, inner sep=0mm, minimum size=2mm] (16) at (-11.5, 5) {};
		\node [style=new style 0, inner sep=0mm, minimum size=2mm] (17) at (-9, 5) {};
		\node [style=new style 0, inner sep=0mm, minimum size=2mm] (18) at (-14, 0) {};
		\node [style=new style 0, inner sep=0mm, minimum size=2mm] (19) at (-14, 2.5) {};
		\node [style=new style 0, inner sep=0mm, minimum size=2mm] (20) at (-11.5, 2.5) {};
		\node [style=new style 0, inner sep=0mm, minimum size=2mm] (21) at (-9, 2.5) {};
		\node [style=new style 0, inner sep=0mm, minimum size=2mm] (22) at (-11.5, 0) {};
		\node [style=new style 0, inner sep=0mm, minimum size=2mm] (23) at (-9, 0) {};
		\node [style=none] (24) at (6.25, 5.75) {\tiny$\overrightarrow{Q_{2}}$};
		\node [style=none] (25) at (1.25, 5.75) {\tiny$\overrightarrow{Q_{0}}$};
		\node [style=none] (26) at (3.75, 5.75) {\tiny$\overrightarrow{Q_{1}}$};
		\node [style=none] (27) at (-14, 6) {};
		\node [style=none] (28) at (-11.5, 6) {};
		\node [style=none] (29) at (-9, 6) {};
		\node [style=none] (30) at (-14, 6.5) {};
		\node [style=none] (31) at (-9, 6.5) {};
		\node [style=none] (33) at (-4, 8) {a) The initial 3-level unitary};
		\node [style=new style 0, inner sep=0mm, minimum size=2mm] (34) at (1.25, 5) {};
		\node [style=new style 0, inner sep=0mm, minimum size=2mm] (35) at (3.75, 5) {};
		\node [style=new style 0, inner sep=0mm, minimum size=2mm] (36) at (6.25, 5) {};
		\node [style=new style 0, inner sep=0mm, minimum size=2mm] (37) at (1.25, 0) {};
		\node [style=new style 0, inner sep=0mm, minimum size=2mm] (38) at (1.25, 2.5) {};
		\node [style=new style 0, inner sep=0mm, minimum size=2mm] (39) at (3.75, 2.5) {};
		\node [style=new style 0, inner sep=0mm, minimum size=2mm] (40) at (6.25, 2.5) {};
		\node [style=new style 0, inner sep=0mm, minimum size=2mm] (41) at (3.75, 0) {};
		\node [style=new style 0, inner sep=0mm, minimum size=2mm] (42) at (6.25, 0) {};
		\node [style=none] (43) at (-15, 5.5) {\tiny$\overrightarrow{P_2}=(0,2)$};
		\node [style=none] (44) at (-11.5, 5.5) {\tiny$\overrightarrow{P_0}=(1,2)$};
		\node [style=none] (45) at (-8, 5.5) {\tiny$\overrightarrow{P_1}=(2,2)$};
		\node [style=none] (46) at (-4, -1.5) {};
		\node [style=none] (47) at (-4, -1.5) {b) Permute the rows and columns to the correct order};
		\node [style=none] (48) at (-6.5, 2.5) {};
		\node [style=none] (49) at (-1.5, 2.5) {};
		\node [style=none] (50) at (-4, 3.5) {};
		\node [style=none] (51) at (-4, 3.5) {\footnotesize conjugate by};
		\node [style=none] (52) at (-2.5, 1.5) {};
		\node [style=none] (53) at (-5.5, 1.5) {};
		\node [style=none] (54) at (-5.5, 0) {};
		\node [style=none] (55) at (-2.5, 0) {};
		\node [style=new style 2,] (56) at (-4, 1.5) {$X$};
		\node [style=none] (58) at (-4, -8.5) {};
		\node [style=none] (59) at (-4, -8.5) {c) The final circuit for $U_{[(1,2),(2,2),(0,2)]}$};
		\node [style=none] (60) at (-19, 2.5) {$U_{[(1,2),(2,2),(0,2)]} \rightarrow $};
		\node [style=none] (61) at (11, 2.5) {$\leftarrow U_{[(0,2),(1,2),(2,2)]}$};
		\node [style=new style 2] (62) at (0, -4) {$X^2$};
		\node [style=new style 1, inner sep=0.5mm] (63) at (0, -7) {$2$};
		\node [style=new style 1, inner sep=0.5mm] (64) at (3, -7) {$2$};
		\node [style=new style 1, inner sep=0.5mm] (65) at (6, -7) {$2$};
		\node [style=new style 1, inner sep=0.5mm, minimum size=0mm] (66) at (-4, 0) {$2$};
	\end{pgfonlayer}
	\begin{pgfonlayer}{edgelayer}
		\draw (2.center) to (0);
		\draw (0) to (4.center);
		\draw (3.center) to (1);
		\draw (1) to (5.center);
		\draw (0) to (1);
		\draw (7.center) to (9);
		\draw (9) to (10);
		\draw (10) to (11.center);
		\draw (15) to (19);
		\draw (19) to (18);
		\draw (18) to (22);
		\draw (22) to (20);
		\draw (19) to (20);
		\draw (15) to (16);
		\draw (16) to (20);
		\draw (20) to (21);
		\draw (16) to (17);
		\draw (22) to (23);
		\draw (23) to (21);
		\draw (17) to (21);
		\draw (34) to (38);
		\draw (38) to (37);
		\draw (37) to (41);
		\draw (41) to (39);
		\draw (38) to (39);
		\draw (34) to (35);
		\draw (35) to (39);
		\draw (39) to (40);
		\draw (35) to (36);
		\draw (41) to (42);
		\draw (42) to (40);
		\draw (36) to (40);
		\draw (53.center) to (52.center);
		\draw (62) to (63);
		\draw (8.center) to (63);
		\draw (63) to (64);
		\draw (64) to (65);
		\draw (65) to (12.center);
		\draw (10) to (65);
		\draw (9) to (64);
		\draw (54.center) to (66);
		\draw (66) to (55.center);
		\draw (56) to (66);
		\draw [style=new edge style 2, bend left=15] (30.center) to (31.center);
		\draw [style=new edge style 2, bend left=345] (28.center) to (27.center);
		\draw [style=new edge style 2, bend left=345, looseness=1.25] (29.center) to (28.center);
		\draw [style=new edge style 4] (48.center) to (49.center);
	\end{pgfonlayer}
\end{tikzpicture}

 \caption{The equivalent circuit for $U_{[(1,2),(2,2),(0,2)]}$ using controlled gates}
    \label{fig:controlledXU}
\end{figure}
\begin{lemma}\label{convert_controlled_lemma}
Let $M\in \U_m(\Z[\frac{1}{3},\omega])$ be a 3-level unitary of type $U \in \U_3(\Z[\frac{1}{3},\omega])$ such that $3^{n-1}<m\le 3^n$ for some $n\ge 2$. Let $m' = 3^n-m$, then the unitary $I_{m'}\oplus M \in \U_{3^n}(\Z[\frac{1}{3},\omega])$ can be decomposed into $n$-qutrits gates consisting of $\ket{2}^{\otimes (n-1)}$-controlled-$X$, $\ket{2}^{\otimes (n-1)}$-controlled-$H$ and a controlled-$U$ gate.
\end{lemma}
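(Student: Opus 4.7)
The approach is a qutrit analog of the Gray code reduction to multi-controlled gates. Identify the three row/column indices on which $M$ acts non-trivially (after padding by $I_{m'}$) as three distinct points $\overrightarrow{P_0}, \overrightarrow{P_1}, \overrightarrow{P_2}$ in the grid $\Delta^n = \{0,1,2\}^n$ via the lexicographic identification of the $3^n$ computational basis states. The goal is to conjugate $I_{m'}\oplus M$ by a sequence of $\ket{2}^{\otimes(n-1)}$-controlled-$X$ gates until its three active indices form a collinear triple along an edge of $\Delta^n$; once this is achieved, Observation~\ref{obs:Gray} identifies the resulting 3-level unitary directly as a $\ket{2}^{\otimes(n-1)}$-controlled-$(X_\sigma^{-1} U X_\sigma)$ gate for some $\sigma \in S_3$.

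\emph{Step 1 (path construction).} First build a sequence of single-coordinate moves carrying $(\overrightarrow{P_0}, \overrightarrow{P_1}, \overrightarrow{P_2})$ onto a canonical collinear triple, say $\overrightarrow{Q_k}=(2,\ldots,2,k)$ for $k\in\{0,1,2\}$. Each move changes one coordinate of one point by $\pm 1$. Because each coordinate has three available values and only three points are being routed inside a grid of at least $3^n \geq 9$ cells, a case analysis on which coordinates of the three starting points already agree yields a collision-free schedule in which no two active points ever coincide at an intermediate step.

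\emph{Step 2 (implementing a coordinate flip).} A single move that changes the $j$-th coordinate of a point $(p_1,\ldots,p_n)$ by $\pm 1$ is, by definition, a multiply controlled $X^{\pm 1}$ gate on the $j$-th wire whose remaining wires are controlled on the fixed values $p_1,\ldots,p_{j-1},p_{j+1},\ldots,p_n$. By Remark~\ref{ket(2)controlled}, any such control string can be converted to $\ket{2}^{\otimes(n-1)}$ by conjugating with $X$ or $X^2$ on the appropriate control wires; the result is an equivalent circuit built entirely from $\ket{2}^{\otimes(n-1)}$-controlled-$X$ gates. Conjugating $I_{m'}\oplus M$ by such a gate precisely relabels the three active indices of the underlying 3-level unitary by a cyclic permutation on their $j$-th coordinate, so after Step~1's moves have been applied, the active indices have been transported to $\overrightarrow{Q_0},\overrightarrow{Q_1},\overrightarrow{Q_2}$ in some order.

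\emph{Step 3 (controlled-$U$ and uncomputation).} By Observation~\ref{obs:Gray}, the conjugated matrix is now a $\ket{2}^{\otimes(n-1)}$-controlled-$(X_\sigma^{-1} U X_\sigma)$ for the permutation $\sigma\in S_3$ determined by the order in which the path matched $P_i$'s to $Q_i$'s. By Remark~\ref{rem:H,X_gen_S_3}, $X_\sigma$ factors as a word in $X$ and $H^2$; controlling each letter on $\ket{2}^{\otimes(n-1)}$ yields exactly the allowed $\ket{2}^{\otimes(n-1)}$-controlled-$X$ and $\ket{2}^{\otimes(n-1)}$-controlled-$H$ gates flanking a single plain controlled-$U$. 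Finally, uncomputing the Step~1 moves by applying the inverses of the controlled-$X$ gates completes the decomposition of $I_{m'}\oplus M$.

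\emph{Main difficulty.} The only genuinely nontrivial point is Step~1: producing, for an arbitrary starting triple, a collision-free sequence of single-coordinate flips that aligns all three points onto the canonical collinear target. The required combinatorics is elementary, since having three values per coordinate provides ample slack, but it demands a short case analysis which I expect to be the bulk of the written proof.
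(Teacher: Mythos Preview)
Your proposal is correct and follows essentially the same approach as the paper's proof: identify the three active indices as points in $\Delta^n$, conjugate by multiply-controlled permutation gates (reduced to $\ket{2}^{\otimes(n-1)}$-controlled-$X$ and -$H$ via Remarks~\ref{rem:H,X_gen_S_3} and~\ref{ket(2)controlled}) to route them onto a single grid line, then invoke Observation~\ref{obs:Gray}. The paper handles your Step~1 informally through worked $n=2$ figures rather than a case analysis; note also that your collision worry is slightly overstated, since conjugation by any permutation is a bijection on $\{0,1,2\}^n$ (so distinct active points can never coincide)---the only care needed is that a controlled-$X$ cycles all three points on its control line simultaneously, which is easily accommodated.
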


\begin{proof}
Let $M' = I_{m'}\oplus M$. Then $M'$ is also a $3$-level unitary of type $U$. We rename the row and column indices of the matrix $M'$ as a $n$-tuple $\Vec{P}=(p_1,...,p_n) \in \{0,1,2\}^n$ with lexicographic ordering. Suppose $M'= U_{[\overrightarrow{P_0},\overrightarrow{P_1},\overrightarrow{P_2}]}$.\\
    
The idea is to travel along the edges of the grid $\Delta^n$ to start from $\overrightarrow{P_j}$ and end up at $\overrightarrow{P_j'}$ such that the points $\overrightarrow{P_0'}, \overrightarrow{P_1'}$ and $\overrightarrow{P_2'}$ are in a straight line along an edge of $\Delta^n$ so that we get the cases of the Observation \ref{obs:Gray}. Each of the movements of a point along the paths of $\Delta^n$ from one vertex to another vertex that changes only one of the coordinates of the point corresponds to conjugating $M'$ by an $(n-1)$-multiply-controlled permutation gate (with the target on the wire where the coordinate changes and controls on all other $n-1$ wires). We illustrate this for the case $n=2$ in Figures \ref{fig:3} and \ref{fig:4}.\\
    
    

Finally, by Remark \eqref{rem:H,X_gen_S_3} and Remark \eqref{ket(2)controlled}, the statement of the lemma follows.
\end{proof}
\begin{remark}
    The procedure in the above lemma is an analogue of the Gray code construction \cite[section 4.5.2]{Nielsen_Chuang_2010} for qutrits. Further, note that we can extend the same procedure to convert any $d$-level unitaries (similar to $3$-level unitaries) into controlled-qudit gates. In that case, we replace $\Delta^n$ by the graph $\Delta^n_d$ where     $\Delta^n_d$ is the graph in $\R^n$ with the set of vertices $\{(x_1,...,x_n):x_1,...,x_n\in  \{0,1,...,d-1\}\}$ and there is an edge between two vertices if and only if the two vertices differ exactly at one coordinate and where they differ, the difference between the coordinates is exactly 1. All other arguments used in the proof of Lemma \ref{convert_controlled_lemma} can also be mimicked for qudits.
\end{remark}

\begin{figure}

    \centering
        $U_{[(0,1),(1,2),(2,2)]} = \begin{blockarray}{(ccc|ccc|ccc)}
    1 & 0 & 0 & 0 & 0 & 0 & 0 & 0 & 0\\
    0 & a_1 & 0 & 0 & 0 & a_2 & 0 & 0 & a_3\\
    0 & 0 & 1 & 0 & 0 & 0 & 0 & 0 & 0\\ \cline{1-9}
    0 & 0 & 0 & 1 & 0 & 0 & 0 & 0 & 0\\
    0 & 0 & 0 & 0 & 1 & 0 & 0 & 0 & 0\\
    0 & b_1 & 0 & 0 & 0 & b_2 & 0 & 0 & b_3\\ \cline{1-9}
    0 & 0 & 0 & 0 & 0 & 0 & 1 & 0 & 0\\
    0 & 0 & 0 & 0 & 0 & 0 & 0 & 1 & 0\\
    0 & c_1 & 0 & 0 & 0 & c_2 & 0 & 0 & c_3\\
  \end{blockarray}$
\begin{tikzpicture}[scale=0.45]
	\begin{pgfonlayer}{nodelayer}
		\node [style=new style 0, inner sep=0mm, minimum size=2mm] (11) at (-6, 6) {};
		\node [style=new style 0, inner sep=0mm, minimum size=2mm] (12) at (-4, 6) {};
		\node [style=new style 0, inner sep=0mm, minimum size=2mm] (13) at (-2, 6) {};
		\node [style=new style 0, inner sep=0mm, minimum size=2mm] (14) at (-6, 4) {};
		\node [style=new style 0, inner sep=0mm, minimum size=2mm] (15) at (-4, 4) {};
		\node [style=new style 0, inner sep=0mm, minimum size=2mm] (16) at (-2, 4) {};
		\node [style=new style 0, inner sep=0mm, minimum size=2mm] (17) at (-6, 2) {};
		\node [style=new style 0, inner sep=0mm, minimum size=2mm] (18) at (-4, 2) {};
		\node [style=new style 0, inner sep=0mm, minimum size=2mm] (19) at (-2, 2) {};
		\node [style=new style 0, inner sep=0mm, minimum size=2mm] (22) at (8, 6) {};
		\node [style=new style 0, inner sep=0mm, minimum size=2mm] (23) at (10, 6) {};
		\node [style=new style 0, inner sep=0mm, minimum size=2mm] (24) at (12, 6) {};
		\node [style=new style 0, inner sep=0mm, minimum size=2mm] (25) at (8, 4) {};
		\node [style=new style 0, inner sep=0mm, minimum size=2mm] (26) at (10, 4) {};
		\node [style=new style 0, inner sep=0mm, minimum size=2mm] (27) at (12, 4) {};
		\node [style=new style 0, inner sep=0mm, minimum size=2mm] (28) at (8, 2) {};
		\node [style=new style 0, inner sep=0mm, minimum size=2mm] (29) at (10, 2) {};
		\node [style=new style 0, inner sep=0mm, minimum size=2mm] (30) at (12, 2) {};
		\node [style=none] (31) at (-2, -2) {};
		\node [style=none] (32) at (-2, -5) {};
		\node [style=new style 1, inner sep=1mm, minimum size=0mm] (33) at (0, -2) {$0$};
		\node [style=new style 2] (34) at (0, -5) {$X$};
		\node [style=new style 1] (35) at (3, -5) {$2$};
		\node [style=new style 2] (36) at (3, -2) {$U$};
		\node [style=new style 1, inner sep=1mm, minimum size=0mm] (37) at (6, -2) {$0$};
		\node [style=new style 2] (38) at (6, -5) {$X^{2}$};
		\node [style=none] (39) at (8, -2) {};
		\node [style=none] (40) at (8, -5) {};
		\node [style=none] (41) at (0.25, 4.5) {};
		\node [style=none] (42) at (5, 4.5) {};
		\node [style=none] (43) at (-4, 6.75) {\tiny$\overrightarrow{P}_{1}$};
		\node [style=none] (44) at (-6.75, 4) {\tiny$\overrightarrow{P}_{0}$};
		\node [style=none] (45) at (-1.25, 6) {\tiny $\overrightarrow{P}_{2}$};
		\node [style=none] (46) at (8, 6.75) {\tiny $\overrightarrow{P}_{0}'$};
		\node [style=none] (47) at (10, 6.75) {\tiny$ \overrightarrow{P}_{1}'$};
		\node [style=none] (48) at (12, 6.75) {\tiny$ \overrightarrow{P}_{2}'$};
		\node [style=none] (49) at (1.5, 3.75) {};
		\node [style=none] (50) at (1.5, 2.25) {};
		\node [style=new style 1,inner sep=1mm, minimum size=0mm] (51) at (2.75, 3.75) {$0$};
		\node [style=new style 2] (52) at (2.75, 2.25) {$X$};
		\node [style=none] (53) at (4, 3.75) {};
		\node [style=none] (54) at (4, 2.25) {};
		\node [style=none] (55) at (2.5, 5.25) {\footnotesize Conjugate by};
		\node [style=none] (56) at (3, -7) {c) The final circuit for $U_{[(0,1),(1,2),(2,2)]}$};
		\node [style=none] (57) at (3.5, 0.5) {};
		\node [style=none] (58) at (3, 0.5) {b) Put the three points onto a line of the grid};
		\node [style=none] (59) at (3, 8) {a) The initial 3-level unitary};
		\node [style=none] (60) at (-11, 4) {$U_{[(0,1),(1,2),(2,2)]} \rightarrow$};
		\node [style=none] (61) at (17, 4) {$\leftarrow U_{[(0,2),(1,2),(2,2)]}$};
	\end{pgfonlayer}
	\begin{pgfonlayer}{edgelayer}
		\draw (11) to (14);
		\draw (14) to (17);
		\draw (17) to (18);
		\draw (18) to (19);
		\draw (19) to (16);
		\draw (15) to (16);
		\draw (15) to (18);
		\draw (15) to (14);
		\draw (15) to (12);
		\draw (12) to (13);
		\draw (13) to (16);
		\draw (11) to (12);
		\draw (25) to (28);
		\draw (28) to (29);
		\draw (29) to (30);
		\draw (30) to (27);
		\draw (26) to (27);
		\draw (26) to (29);
		\draw (26) to (25);
		\draw (26) to (23);
		\draw (23) to (24);
		\draw (24) to (27);
		\draw (22) to (23);
		\draw (31.center) to (33);
		\draw (33) to (36);
		\draw (36) to (37);
		\draw (37) to (39.center);
		\draw (37) to (38);
		\draw (36) to (35);
		\draw (33) to (34);
		\draw (32.center) to (34);
		\draw (34) to (35);
		\draw (35) to (38);
		\draw (38) to (40.center);
		\draw (49.center) to (51);
		\draw (51) to (52);
		\draw (50.center) to (52);
		\draw (52) to (54.center);
		\draw (51) to (53.center);
		\draw [style=new edge style 4] (41.center) to (42.center);
		\draw [style=new edge style 0] (25) to (22);
	\end{pgfonlayer}
\end{tikzpicture}
\caption{The equivalent circuit for $U_{[(0,1),(1,2),(2,2)]}$ using controlled gates}
\label{fig:3}

\end{figure}

\begin{figure}
    \centering
        \centering
    $U_{[(0,1),(1,2),(2,0)]}=\begin{blockarray}{(ccc|ccc|ccc)}
    1 & 0 & 0 & 0 & 0 & 0 & 0 & 0 & 0\\
    0 & a_1 & 0 & 0 & 0 & a_2 & a_3 & 0 & 0\\
    0 & 0 & 1 & 0 & 0 & 0 & 0 & 0 & 0\\ \cline{1-9}
    0 & 0 & 0 & 1 & 0 & 0 & 0 & 0 & 0\\
    0 & 0 & 0 & 0 & 1 & 0 & 0 & 0 & 0\\
    0 & b_1 & 0 & 0 & 0 & b_2 & b_3 & 0 & 0\\\cline{1-9}
    0 & c_1 & 0 & 0 & 0 & c_2 & c_3 & 0 & 0\\
    0 & 0 & 0 & 0 & 0 & 0 & 0 & 1 & 0\\
    0 & 0 & 0 & 0 & 0 & 0 & 0 & 0 & 1
  \end{blockarray}$
\begin{tikzpicture}[scale=0.4]
	\begin{pgfonlayer}{nodelayer}
		\node [style=new style 0, inner sep=0mm, minimum size=2mm] (2) at (-20, 5.75) {};
		\node [style=new style 0, inner sep=0mm, minimum size=2mm] (3) at (-18, 5.75) {};
		\node [style=new style 0, inner sep=0mm, minimum size=2mm] (4) at (-16, 5.75) {};
		\node [style=new style 0, inner sep=0mm, minimum size=2mm] (5) at (-20, 3.75) {};
		\node [style=new style 0, inner sep=0mm, minimum size=2mm] (6) at (-18, 3.75) {};
		\node [style=new style 0, inner sep=0mm, minimum size=2mm] (7) at (-16, 3.75) {};
		\node [style=new style 0, inner sep=0mm, minimum size=2mm] (8) at (-20, 1.75) {};
		\node [style=new style 0, inner sep=0mm, minimum size=2mm] (9) at (-18, 1.75) {};
		\node [style=new style 0, inner sep=0mm, minimum size=2mm] (10) at (-16, 1.75) {};
		\node [style=new style 0, inner sep=0mm, minimum size=2mm] (13) at (1, 6) {};
		\node [style=new style 0, inner sep=0mm, minimum size=2mm] (14) at (3, 6) {};
		\node [style=new style 0, inner sep=0mm, minimum size=2mm] (15) at (5, 6) {};
		\node [style=new style 0, inner sep=0mm, minimum size=2mm] (16) at (1, 4) {};
		\node [style=new style 0, inner sep=0mm, minimum size=2mm] (17) at (3, 4) {};
		\node [style=new style 0, inner sep=0mm, minimum size=2mm] (18) at (5, 4) {};
		\node [style=new style 0, inner sep=0mm, minimum size=2mm] (19) at (1, 2) {};
		\node [style=new style 0, inner sep=0mm, minimum size=2mm] (20) at (3, 2) {};
		\node [style=new style 0, inner sep=0mm, minimum size=2mm] (21) at (5, 2) {};
		\node [style=none] (34) at (-18, 6.5) {\tiny $\overrightarrow{P}_{1}$};
		\node [style=none] (35) at (-20.75, 3.75) {\tiny $\overrightarrow{P}_{0}$};
		\node [style=none] (36) at (-16.25, 1) {\tiny $\overrightarrow{P}_{2}$};
		\node [style=none] (37) at (0.25, 6) {\tiny $\overrightarrow{P}_{1}'$};
		\node [style=new style 0, inner sep=0mm, minimum size=2mm] (50) at (-20, -3.75) {};
		\node [style=new style 0, inner sep=0mm, minimum size=2mm] (51) at (-18, -3.75) {};
		\node [style=new style 0, inner sep=0mm, minimum size=2mm] (52) at (-16, -3.75) {};
		\node [style=new style 0, inner sep=0mm, minimum size=2mm] (53) at (-20, -5.75) {};
		\node [style=new style 0, inner sep=0mm, minimum size=2mm] (54) at (-18, -5.75) {};
		\node [style=new style 0, inner sep=0mm, minimum size=2mm] (55) at (-16, -5.75) {};
		\node [style=new style 0, inner sep=0mm, minimum size=2mm] (56) at (-20, -7.75) {};
		\node [style=new style 0, inner sep=0mm, minimum size=2mm] (57) at (-18, -7.75) {};
		\node [style=new style 0, inner sep=0mm, minimum size=2mm] (58) at (-16, -7.75) {};
		\node [style=new style 2] (62) at (-16.75, -11.25) {$X^{2}$};
		\node [style=new style 2] (63) at (-13.75, -11.25) {$X$};
		\node [style=new style 2] (64) at (-1.75, -11.25) {$X^{2}$};
		\node [style=new style 2] (65) at (1.25, -11.25) {$X$};
		\node [style=new style 1, inner sep=0.5mm] (67) at (-10.75, -11.25) {$0$};
		\node [style=new style 1,inner sep=0.5mm] (68) at (-7.75, -11.25) {$0$};
		\node [style=new style 1, inner sep=0.5mm] (69) at (-16.75, -14.25) {$2$};
		\node [style=new style 1, inner sep=0.5mm] (70) at (-13.75, -14.25) {$0$};
		\node [style=new style 2] (72) at (-10.75, -14.25) {$X^2$};
		\node [style=new style 2] (73) at (-7.75, -14.25) {$U$};
		\node [style=new style 1, inner sep=0.5mm] (74) at (-1.75, -14.25) {$0$};
		\node [style=none] (75) at (-18.75, -11.25) {};
		\node [style=none] (76) at (-18.75, -14.25) {};
		\node [style=new style 1, inner sep=0.5mm] (77) at (1.25, -14.25) {$2$};
		\node [style=none] (78) at (3.25, -11.25) {};
		\node [style=none] (79) at (3.25, -14.25) {};
		\node [style=none] (80) at (-21, -3.75) {};
		\node [style=none] (81) at (-21, -7.75) {};
		\node [style=none] (82) at (-20.5, -3.75) {};
		\node [style=none] (83) at (-20.5, -5.25) {};
		\node [style=none] (84) at (-20.5, -6.25) {};
		\node [style=none] (85) at (-20.5, -7.75) {};
		\node [style=none] (86) at (0.25, 2.25) {\tiny $\overrightarrow{P}_{2}'$};
		\node [style=none] (87) at (-13.25, 4.25) {};
		\node [style=none] (89) at (-2, 4.25) {};
		\node [style=none] (90) at (-8, 5) {\footnotesize {conjugate by}};
		\node [style=none] (91) at (-12.25, 3.25) {};
		\node [style=none] (92) at (-12.25, 1.5) {};
		\node [style=none] (93) at (-9.75, 3.25) {};
		\node [style=none] (94) at (-9.75, 1.5) {};
		\node [style=new style 2, inner sep=1mm] (95) at (-11, 3.25) {$X^{2}$};
		\node [style=new style 1, inner sep=0.5mm, minimum size=0mm] (96) at (-11, 1.5) {$2$};
		\node [style=none] (97) at (-6.25, 3.25) {};
		\node [style=none] (98) at (-6.25, 1.5) {};
		\node [style=none] (99) at (-3.75, 3.25) {};
		\node [style=none] (100) at (-3.75, 1.5) {};
		\node [style=new style 2] (101) at (-5, 3.25) {$X$};
		\node [style=new style 1, inner sep=0.5mm, minimum size=0mm] (102) at (-5, 1.5) {$0$};
		\node [style=none] (103) at (-8, 2.5) {\footnotesize {and}};
		\node [style=none] (104) at (-9.25, -7.25) {};
		\node [style=none] (105) at (-9.25, -5.5) {};
		\node [style=none] (106) at (-6.75, -7.25) {};
		\node [style=none] (107) at (-6.75, -5.5) {};
		\node [style=new style 2, inner sep=1mm] (108) at (-8, -7.25) {$X^{2}$};
		\node [style=new style 1, inner sep=0.5mm, minimum size=0mm] (109) at (-8, -5.5) {$0$};
		\node [style=none] (110) at (-11, -4.5) {};
		\node [style=none] (111) at (-4.75, -4.5) {};
		\node [style=none] (112) at (-8, -3.75) {\footnotesize {conjugate by} };
		\node [style=none] (115) at (0.25, 4.25) {\tiny $\overrightarrow{P}_{0}'$};
		\node [style=none] (117) at (-9, -0.75) {};
		\node [style=none] (118) at (-9, -0.75) {};
		\node [style=none] (119) at (-8, -0.75) {b) First put the three points onto a line of the grid};
		\node [style=none] (120) at (-8, -9) {c) Permute the rows to the correct order};
		\node [style=none] (121) at (-8, -16) {d) Final circuit for $U_{[(0,1),(1,2),(2,0)]}$};
		\node [style=none] (122) at (-8, 8.5) {a) The initial 3-level unitary};
		\node [style=none] (123) at (-20, -3) {\tiny $\overrightarrow{P}_{1}'$};
		\node [style=none] (124) at (-20, -8.5) {\tiny $\overrightarrow{P}_{2}'$};
		\node [style=none] (125) at (-20.75, -5.75) {\tiny $\overrightarrow{P}_{0}'$};
		\node [style=none] (126) at (9, 4) {$\leftarrow U_{[(0,1),(0,2),(0,0)]}$};
		\node [style=none] (127) at (-25, 3.75) {$U_{[(0,1),(1,2),(2,0)]} \rightarrow$};
		\node [style=none] (128) at (8, -5.25) {$\leftarrow U_{[(0,0),(0,1),(0,2)]}$};
		\node [style=new style 0, inner sep=0mm, minimum size=2mm] (129) at (-1, -3.25) {};
		\node [style=new style 0, inner sep=0mm, minimum size=2mm] (130) at (1, -3.25) {};
		\node [style=new style 0, inner sep=0mm, minimum size=2mm] (131) at (3, -3.25) {};
		\node [style=new style 0, inner sep=0mm, minimum size=2mm] (132) at (-1, -5.25) {};
		\node [style=new style 0, inner sep=0mm, minimum size=2mm] (133) at (1, -5.25) {};
		\node [style=new style 0, inner sep=0mm, minimum size=2mm] (134) at (3, -5.25) {};
		\node [style=new style 0, inner sep=0mm, minimum size=2mm] (135) at (-1, -7.25) {};
		\node [style=new style 0, inner sep=0mm, minimum size=2mm] (136) at (1, -7.25) {};
		\node [style=new style 0, inner sep=0mm, minimum size=2mm] (137) at (3, -7.25) {};
		\node [style=none] (138) at (-1.75, -3.25) {\tiny $\overrightarrow{Q}_2$};
		\node [style=none] (139) at (-1.75, -7) {\tiny $\overrightarrow{Q}_0$};
		\node [style=none] (140) at (-1.75, -5) {\tiny $\overrightarrow{Q}_1$};
		\node [style=new style 1, inner sep=0.5mm] (141) at (-4.75, -11.25) {0};
		\node [style=new style 2] (142) at (-4.75, -14.25) {$X$};
	\end{pgfonlayer}
	\begin{pgfonlayer}{edgelayer}
		\draw (2) to (5);
		\draw (5) to (8);
		\draw (8) to (9);
		\draw (9) to (10);
		\draw (10) to (7);
		\draw (6) to (7);
		\draw (6) to (9);
		\draw (6) to (5);
		\draw (6) to (3);
		\draw (3) to (4);
		\draw (4) to (7);
		\draw (2) to (3);
		\draw (16) to (19);
		\draw (21) to (18);
		\draw (17) to (18);
		\draw (17) to (20);
		\draw (17) to (16);
		\draw (17) to (14);
		\draw (14) to (15);
		\draw (15) to (18);
		\draw (13) to (16);
		\draw (50) to (53);
		\draw (53) to (56);
		\draw (56) to (57);
		\draw (57) to (58);
		\draw (58) to (55);
		\draw (54) to (55);
		\draw (54) to (57);
		\draw (54) to (53);
		\draw (54) to (51);
		\draw (51) to (52);
		\draw (52) to (55);
		\draw (50) to (51);
		\draw (75.center) to (62);
		\draw (62) to (63);
		\draw (67) to (68);
		\draw (68) to (64);
		\draw (64) to (65);
		\draw (64) to (74);
		\draw (68) to (73);
		\draw (67) to (72);
		\draw (63) to (70);
		\draw (62) to (69);
		\draw (76.center) to (69);
		\draw (69) to (70);
		\draw (72) to (73);
		\draw (73) to (74);
		\draw (74) to (77);
		\draw (65) to (77);
		\draw (77) to (79.center);
		\draw (65) to (78.center);
		\draw (91.center) to (95);
		\draw (95) to (93.center);
		\draw (92.center) to (96);
		\draw (96) to (94.center);
		\draw (95) to (96);
		\draw (97.center) to (101);
		\draw (101) to (99.center);
		\draw (101) to (102);
		\draw (98.center) to (102);
		\draw (102) to (100.center);
		\draw (104.center) to (108);
		\draw (108) to (106.center);
		\draw (105.center) to (109);
		\draw (109) to (107.center);
		\draw (108) to (109);
		\draw (63) to (67);
		\draw (70) to (72);
		\draw (132) to (135);
		\draw (137) to (134);
		\draw (133) to (134);
		\draw (133) to (136);
		\draw (133) to (132);
		\draw (133) to (130);
		\draw (130) to (131);
		\draw (131) to (134);
		\draw (129) to (132);
		\draw (137) to (135);
		\draw (130) to (129);
		\draw (141) to (142);
		\draw [style=new edge style 2, bend left=15] (81.center) to (80.center);
		\draw [style=new edge style 0, bend right=15, looseness=1.25] (82.center) to (83.center);
		\draw [style=new edge style 0, bend right=15] (84.center) to (85.center);
		\draw [style=new edge style 4] (87.center) to (89.center);
		\draw [style=new edge style 4] (110.center) to (111.center);
		\draw [style=new edge style 2] (21) to (20);
		\draw [style=new edge style 2] (20) to (19);
		\draw [style=new edge style 2] (14) to (13);
	\end{pgfonlayer}
\end{tikzpicture}

\caption{The equivalent circuit for $U_{[(0,1),(1,2),(2,0)]}$ using controlled gates}
\label{fig:4}
\end{figure}



\subsection{Proof of Theorem \ref{generalizationGS}}

\begin{proof}[Proof of Theorem \ref{generalizationGS}]
  
    When $n=1$, we have $\U_{3^n}(\Z[\frac{1}{3},\omega]) = $ Clifford$+R$ \cite{kalra2023synthesis} and so the statement of the theorem for $n=1$ is essentially the same as \cite[Corollary 23]{metaplectic}.\\
    
    When $n\ge 2$, using Lemma \ref{level_matrices_lemma} we see that $U \in \U_{3^n}(\Z[\frac{1}{3},\omega])$ can be decomposed into 3-level unitaries of type $X, S$ and $H$; 2-level unitaries of type $\begin{pmatrix}
        0 & 1\\ 1 & 0
    \end{pmatrix} $; and 1-level unitaries of type $\omega$ and -1. Since $3^n \ge 3^2 > 3$, so a 1-level unitary of type -1 or $\omega$ is equivalent to a 3-level unitary of type $X_{\sigma}^{-1}RX_{\sigma}$ or $X_{\sigma}^{-1}SX_{\sigma}$ respectively for $\sigma = (1\;2\;3)$ or $(1\;3\;2)$. Further, a two-level unitary of type $\begin{pmatrix}
        0 & 1\\ 1 & 0
    \end{pmatrix}$ is equivalent to a 3-level unitary of type $H^2$. Therefore, $U$ can be decomposed into a sequence of 3-level unitaries of type $X,S,H$ and $R$.\\
    
    Using Lemma \ref{convert_controlled_lemma}, we see that any such 3-level untaries can be decomposed into $\ket{2}^{\otimes (n-1)}$-controlled gates of type $X,S,H$ and $R$. Note that $X, S$ and $H$ gates are single-qutrit Clifford gates. Using \cite[Theorem 1]{Yeh_2022}, all these controlled gates can be decomposed into a $n$-qutrit Clifford$+T$ circuit.\\

    Further, using \cite{metaplectic}, the gate $R$ can be viewed as an element of 2-qutrit Clifford$+T$. Therefore using \cite[Theorem 1]{Yeh_2022}, we see that $\ket{2}^{\otimes (n-1)}$-controlled-$R$ can be decomposed into a $(n+1)$-qutrit Clifford$+T$ circuit.\\

    Hence we see that $U\in U_{3^n}(\Z[\frac{1}{3},\omega])$ can be decomposed into a circuit consisting of qutrit Clifford$+T$ gates with at most one ancilla (when $R$ can is involved).  
\end{proof}

\section{Unitaries with entries from $\mathbb{Z}{ [\frac{1}{3},\zeta]}$}\label{sec4}

In this section we obtain an algorithm to represent an arbitrary $3^n \times 3^n$ unitary matrix with entries in $\mathbb{Z}{ [\frac{1}{3},\zeta]}$ in terms of a $(n+2)-$ qutrit Clifford $+ T$
circuit. We do this by composing the catalytic embedding 
$\Phi :  \U_{3^n}(\Z[\frac{1}{3},\zeta]) \hookrightarrow \U_{3^{n+1}}(\Z[\frac{1}{3},\omega])$
with the embedding $\Psi : \U_{3^{n+1}}(\Z[\frac{1}{3},\omega]) \hookrightarrow (C + T)^{n+2}$
obtained in Theorem \ref{th1} to get an embedding 

$$\U_{3^n}(\Z[\frac{1}{3},\zeta]) \hookrightarrow (C + T)^{n+2}.$$

We already know that single-qutrit Clifford$+T$ gates are not enough to decompose single-qutrit gates with entries from $\Z[\frac{1}{3},\zeta]$. Theorem 2 gives an algorithm to implement such gates using a Clifford $+ T$ circuit given we have two more ancillae.
In particular, for $n=1$, our theorem takes a single-qutrit Clifford$+\mathcal{D}$ unitary and embed it inside 3-qutrit unitaries and decomposes this embedded unitary into 3-qutrit-Clifford$+T$. In general, an $n$-qutrit gate with entries from $\Z[\frac{1}{3},\zeta]$ is represented in $(n+2)$-qutrit Clifford$+T$.
At first we embed $\U_{3^n}(\Z[\frac{1}{3},\zeta])$ into $\U_{3^{n+1}}(\Z[\frac{1}{3},\omega])$ using catalytic embedding which adds one ancilla.
The final step is to apply the algorithm from Theorem \ref{th1}.

The following theorem is a complete analogue for qutrits of the main result in 
\cite{giles2013exact} with the ring $\Z[\frac{1}{\sqrt{2}},i]$ replaced by the natural ring $\mathbb{Z}{ [\frac{1}{3},\zeta]}$ for multi-qutrit Clifford$+T$. 

\begin{theorem}\label{th2}
Let $U$ be a unitary $3^n \times 3^n$ matrix. Then the following are equivalent:
\begin{enumerate}[label=(\alph*)]
\item $U$ can be exactly represented by a quantum circuit over Clifford$+T$ gate set, possibly using some finite number of ancillae.
\item  The entries of $U$ belong to the ring $\Z[\frac{1}{3},\zeta]$
\end{enumerate}
Further, in (a), at most two ancillas are sufficient.
\end{theorem}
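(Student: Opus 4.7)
The plan is to prove the two implications separately; (a)$\Rightarrow$(b) will be a routine verification, while (b)$\Rightarrow$(a) will be obtained by composing the catalytic embedding of Section \ref{seccat} with the synthesis algorithm of Theorem \ref{th1}.

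For (a)$\Rightarrow$(b), I would observe that every generator of multi-qutrit Clifford$+T$ has entries in $\Z[\frac{1}{3},\zeta]$: the Hadamard $H$ contributes $\frac{1}{\sqrt{-3}} = \frac{\omega^2}{\chi}\in\Z[\frac{1}{3},\omega]\subset\Z[\frac{1}{3},\zeta]$, $S$ contributes $\omega$, $T$ contributes $\zeta$, and $CX$ has entries in $\{0,1\}$. Since this ring is closed under matrix products and Kronecker products, any $(n+k)$-qutrit Clifford$+T$ unitary $V$ has entries in $\Z[\frac{1}{3},\zeta]$. The entries of the implemented $U$ are then matrix elements of $V$ between fixed ancilla states whose components can be assumed to lie in the ring, so they also belong to $\Z[\frac{1}{3},\zeta]$.

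For (b)$\Rightarrow$(a), given $U\in\U_{3^n}(\Z[\frac{1}{3},\zeta])$ the idea is a two-step reduction. First I would apply the catalytic embedding $\Phi = \Phi_2\colon \U_{3^n}(\Z[\frac{1}{3},\zeta]) \hookrightarrow \U_{3^{n+1}}(\Z[\frac{1}{3},\omega])$ from Section \ref{seccat}, producing $\Phi(U)$ satisfying $\Phi(U)(\ket v \otimes \ket \lambda) = (U \ket v)\otimes\ket \lambda$ with $\ket\lambda = (\zeta^2,\zeta,1)^t$. Since $\Phi(U)$ has entries in the smaller ring $\Z[\frac{1}{3},\omega]$, Theorem \ref{th1} applies and provides an injective homomorphism $\Psi\colon\U_{3^{n+1}}(\Z[\frac{1}{3},\omega])\hookrightarrow(\text{Cliff}+T)^{(n+2)}$ together with a nonzero ancilla state $\ket w$ such that $\Psi(\Phi(U))(\ket{v'}\otimes\ket w)= \Phi(U)\ket{v'}\otimes\ket w$. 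Specializing to $\ket{v'} = \ket v \otimes \ket \lambda$ gives
$$\Psi(\Phi(U))\bigl(\ket v \otimes \ket\lambda\otimes\ket w\bigr) = (U\ket v)\otimes \ket\lambda\otimes\ket w,$$
so the composition $\Psi\circ\Phi$ realizes $U$ on the first $n$ qutrits using the two catalytic ancillae initialized in $\ket\lambda\otimes\ket w$, yielding a Clifford$+T$ circuit on $n+2$ qutrits and hence using exactly two ancillae.

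The only subtle point, and the main thing to verify carefully, is that chaining the two embeddings respects the catalytic structure: the ancilla $\ket w$ from Theorem \ref{th1} must be kept in product form with the catalytic state $\ket\lambda$ from $\Phi$ throughout the circuit. This is automatic because $\Psi$ is defined on the whole group $\U_{3^{n+1}}(\Z[\frac{1}{3},\omega])$ and enjoys its catalytic property with respect to $\ket w$ for every input vector; the additional tensor factor $\ket\lambda$ is simply absorbed into the input vector $\ket{v'}$ on which $\Psi$ acts. Once this is observed, the ancilla count (one from $\Phi$ plus one from $\Psi$, totaling two) and the promised bound in the final clause of the theorem both follow directly.
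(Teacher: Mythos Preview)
Your proof is correct and follows essentially the same approach as the paper: the implication (a)$\Rightarrow$(b) is dispatched as a routine closure argument, and (b)$\Rightarrow$(a) is obtained by composing the catalytic embedding $\Phi$ of Section~\ref{seccat} with the embedding $\Psi$ from Theorem~\ref{th1}, yielding the two-ancilla bound. Your additional paragraph verifying that the two catalytic structures compose is a welcome elaboration of a point the paper leaves implicit.
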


\begin{proof}
The implication (a) $\implies$ (b) is trivial. For the converse implication, we first use the catalytic embedding described in Section \ref{seccat} to embed $\U_{3^n}(\Z[\frac{1}{3},\zeta])$ inside $\U_{3^{n+1}}(\Z[\frac{1}{3},\omega])$. This step adds one ancilla (the catalyst involved in the embedding becomes the ancilla). Then we use Theorem \ref{generalizationGS} to construct a circuit over Clifford$+T$ with at most one more ancilla. This concludes the theorem.
\end{proof}
\section{Conclusion}
In this paper, we extended the celebrated Giles and Selinger's result for multi-qubit Clifford$+T$ circuits to the multi-qutrit case. We presented an algorithm to perform synthesis for unitaries over $\mathcal{U}_{3^n}(\mathbb{Z}[\frac{1}{3},e^{2\pi i/3}])$ using the Clifford$+T$ gate with at most a single extra ancilla. We further combined this result with the recently proposed catalytic embedding method to give an algorithm for the exact synthesis of a unitary in $\mathcal{U}_{3^n}(\mathbb{Z}[\frac{1}{3},e^{2\pi i/9}])$ over the multi-qutrit Clifford$+T$ gate set.

Ideally one would like to extend these results for $p> 3$ but this turns out to be much harder because the sde profile (amounts to a building structure) for $p>3$ is significantly more complicated. In particular, no exact synthesis algorithm seems to be known for single qudit gates of dimension $p > 3$.

\section{Acknowledgements}
The pictures were made using the software TikZit.
ARK would like to thank NTT research for financial and technical support. The work of Jon Yard and Sam Winnick was supported in part by the NSERC Discovery under Grant No. RGPIN-2018-04742, the NSERC project FoQaCiA under Grant No. ALLRP-569582-21, and the Perimeter Institute for Theoretical Physics. 

Research at Perimeter Institute and IQC is supported by the Government of Canada through Innovation, Science and Economic Development Canada, and by the Province of Ontario through the Ministry of Research, Innovation and Science.
\bibliography{qutrit}
\bibliographystyle{ssg}
\end{document}